\documentclass[twocolumn,12pt]{article}

\usepackage[american]{babel}
\usepackage{tikz}
\usepackage{natbib}
\newcommand{\ignore}[1]{}

\ignore{\title{\bf Database Querying under Missing Values  Governed by Missingness Mechanisms}

\author{{\bf Leopoldo Bertossi}^1, {\bf Farouk Toumani}^2 and {\bf Maxime Buron}^2\\
    ^1Carleton University,
    Ottawa, Canada \& IMFD, Chile.\\
    ^2LIMOS, CNRS, UCA, France.}
}

\usepackage{amsmath}
\usepackage{amssymb}
\usepackage{graphicx}
\usepackage{color}
\usepackage{hhline}
\usepackage{tabu}
\usepackage{multirow}
\usepackage{caption}

\usepackage{tabu}

\usepackage{hyperref}

\newcommand{\boxtheorem}{\hfill $\blacksquare$\\}
\newcommand{\nit}[1]{{\it #1}}
\DeclareMathOperator*{\argmax}{\mathsf{argmax}}
\DeclareMathOperator*{\argmin}{\mathsf{argmin}}

\newcounter{theorem-counter}
\newcounter{corollary-counter}
\newcounter{lemma-counter}
\newcounter{definition-counter}
\newcounter{example-counter}
\newcounter{proposition-counter}
\newcounter{remark-counter}
\newcounter{problem-counter}
\newcounter{claim-counter}

\newenvironment{theorem}%
{\vskip \abovedisplayskip \refstepcounter{theorem-counter}%
\noindent {\bf Theorem \arabic{theorem-counter}.}}%

\newenvironment{corollary}%
{\vskip \abovedisplayskip \refstepcounter{corollary-counter}%
\noindent {\bf Corollary \arabic{corollary-counter}.}}%

\newenvironment{lemma}%
{\vskip \abovedisplayskip \refstepcounter{lemma-counter}%
\noindent {\bf Lemma \arabic{lemma-counter}.}}%

 \newenvironment{definition}%
 {\vskip \abovedisplayskip \refstepcounter{definition-counter}%
 \noindent {\bf Definition \arabic{definition-counter}.}}%

\newenvironment{example}%
{\vskip \abovedisplayskip \refstepcounter{example-counter}%
\noindent {\bf Example \arabic{example-counter}.}}%

{\vskip \abovedisplayskip \refstepcounter{exampleAp-counter}%
\noindent {\bf Example \ref{app:examples}.\arabic{exampleAp-counter}.}}%

\newenvironment{proposition}%
{\vskip \abovedisplayskip \refstepcounter{proposition-counter}%
\noindent {\bf Proposition \arabic{proposition-counter}.}}%

\newenvironment{remark}%
{\vskip \abovedisplayskip \refstepcounter{remark-counter}%
\noindent {\bf Remark \arabic{remark-counter}.}}%

{\vskip \abovedisplayskip \refstepcounter{problem-counter}%
\noindent {\bf Problem \arabic{problem-counter}.}}%

{\vskip \abovedisplayskip \refstepcounter{claim-counter}%
\noindent {\bf Claim \arabic{claim-counter}.}}%

\newcounter{propositionA-counter}
\newcounter{lemmaA-counter}
\newcounter{exampleAp-counter}

{\vskip \abovedisplayskip \refstepcounter{lemmaA-counter}%
\noindent {\bf Lemma A.\arabic{lemmaA-counter}.}}%

{\vskip \abovedisplayskip \refstepcounter{propositionA-counter}%
\noindent {\bf Proposition A.\arabic{propositionA-counter}.}}%

\newcommand{\mc}[1]{\mathcal{ #1}}

\newcommand{\mbb}[1]{\mathbb{ #1}}
\newcommand{\mbf}[1]{\mathbf{ #1}}

\newcommand{\C}[1]{\mathcal{C}}
\newcommand{\T}[1]{\mathcal{T}}

\newcommand{\na}{{\bf \sf na}}

\newcommand{\Nn}{\mbox{\scriptsize \sf NULL}}

\newcommand{\red}[1]{\textcolor{red}{#1}}

\newcommand{\blue}[1]{\textcolor{blue}{#1}}

\newcommand\independent{\protect\mathpalette{\protect\independenT}{\perp}}
\def\independenT#1#2{\mathrel{\rlap{$#1#2$}\mkern2mu{#1#2}}}

\newcolumntype{M}[1]{D{.}{.}{1.#1}}

\title{{\bf \Huge Database Querying under Missing Values  Governed by Missingness Mechanisms}

\vspace{2mm}
{\bf Leopoldo Bertossi}$^1$, {\bf Farouk Toumani}$^2$ and {\bf Maxime Buron}$^2$

    $^1$ Carleton University,
    Ottawa, Canada \& IMFD, Chile.\\
    $^2$ LIMOS, CNRS, UCA, France.}

    \date{}

\begin{document}
\maketitle
\pagestyle{plain}
\thispagestyle{empty}

\begin{abstract}
We address the problems of giving a semantics
to- and doing query answering (QA) on a relational
database (RDB) that has missing values (MVs). The
causes for the latter are governed by a Missingness
Mechanism that is modelled as a Bayesian Network, which represents a Missingness Graph (MG) and
involves the DB attributes. Our approach considerable
departs from the treatment of RDBs with NULL
(values). The MG together with the
observed DB allow to build a block-independent
probabilistic DB, on which basis we propose two
QA techniques that jointly capture probabilistic uncertainty
and statistical plausibility of the implicit
imputation of MVs. We obtain complexity
results that characterize the computational feasibility
of those approaches.
\end{abstract}

\vspace{-2mm}\section{Introduction}\label{sec:intro}

\vspace{-2mm}
It is common to find missing values (MVs) in a  relational database (DB) $D^\star$, that is, values for some attributes that are not reported. We will  use \na, for ``not available", to indicate that the true value of a variable in that place is  absent. No other semantics is assigned to  \na.
\ We consider the {\em observed}  DB $D^\star$ as  obtained from an independent  sample from an outside reality.\footnote{This kind of observed DBs, including independence and MVs,  are common with census-related data \cite{gelman}.} We assume that  $D^\star$  is an incomplete representation of the ``true", possibly unknown and {\em underlying} DB $D$ that represents the external reality. All of $D$'s attributes have non-\na \ values that we may not see.

\vspace{-2mm}
\begin{example}\label{ex:first}  Relation $R^\star$  in Table \ref{table:first} belongs to  an observed DB $D^\star$. We will use global tuple identifiers (tids): $\tau_1$, etc. Some (true) values are missing, as indicated with \na. Following notation and conventions in \citep{mohan21}, \ attribute \  $B^o$ \ is   fully observable, i.e. it never shows  \na, whereas an attributes  $A^\star, C^\star$ may have MVs.  $R^\star$ has the {\em observed schema} $R^\star(A^\star,B^o,C^\star)$.

\vspace{-5mm}
\begin{center}{\scriptsize
\captionof{table}{Relation with Missing Values.}\label{table:first}
$\begin{tabu}{c|c|c|c|}\hline
R^\star & A^\star & B^o & C^\star \\ \hline
\tau_1 & a_1 & 0& c_1\\
\tau_2 & a_2& 1& \na\\
\tau_3 &\na & 0&c_3\\
\tau_4 & a_4&0&c_4\\
\tau_5 & \na&1&c_5\\
\tau_6 &\na&1& \na\\ \hhline{~---}
\end{tabu}$
}
\end{center}

\vspace{-3mm}

\noindent         The underlying  DB $D$ has the {\em underlying schema} $R(A^m,B^o,C^m)$, where the superscript ``$m$''  indicates that the attribute may have MVs. We do not observe $A^m$ directly, but via $A^\star$. When the latter does not show an \na, it shows the real value of $A^m$.
\boxtheorem
 \end{example}

\vspace{-5mm}The main question that underlies our work is the following: \ {\em How can we query the ``underlying" DB $D$ in a meaningful way ``through" the observed DB $D^\star$?} \  This quest looks hopeless unless we have some additional information. Actually, we will assume we know about the {\em causes for the occurrence of MVs} by means of a Bayesian Network (BN) \cite{pearl} that stochastically models, as a {\em Missingness Graph} (MG), the  {\em Missingness Mechanisms} (MMs) at play \cite{rubinBook}. Those MMs describe under what stochastic conditions MVs may appear.

\vspace{-5mm}
\begin{center}
    \includegraphics[width=2.7cm]{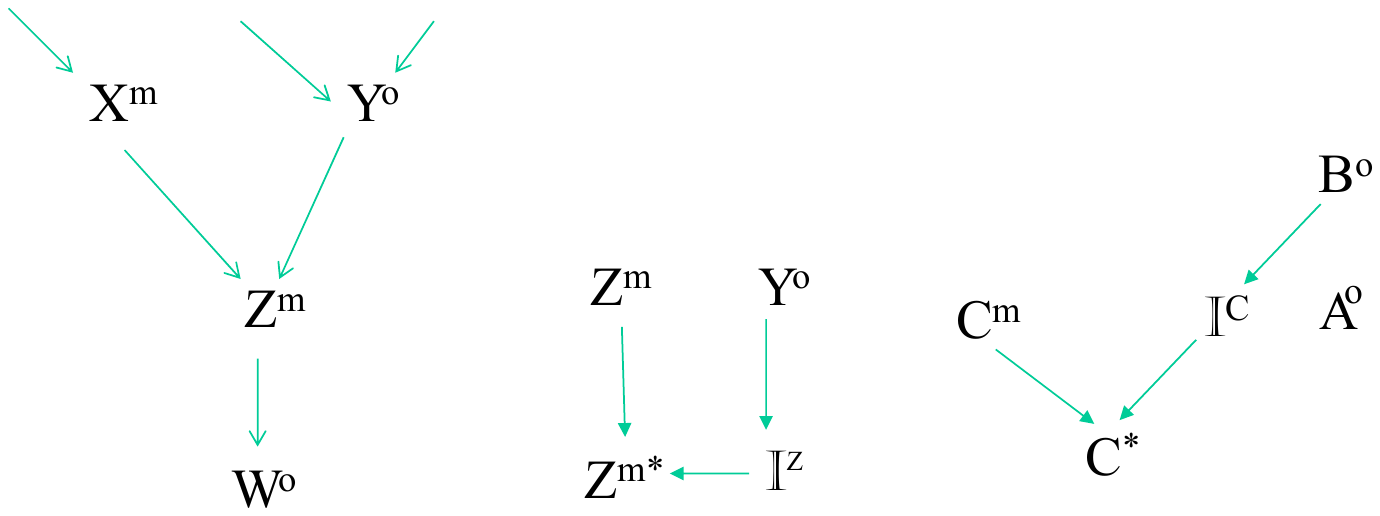}
\vspace{-2mm}\captionof{figure}{Missingness Graph.}\label{fig:mg3Intro}
\end{center}

\vspace{-6mm}\begin{example}\label{ex:second} (Ex. \ref{ex:first} cont.) \  For simplicity, assume now that attribute $A$ is fully observed and denoted with $A^o$. The MG in Figure \ref{fig:mg3Intro} shows $A^o$ as
independent from the other attributes. \ $\mbb{I}^C$ is an indicator variable for $C^m$ that
takes  the value $1$ when the value for $C^m$ is missing, and $0$,
otherwise. The MG tells us that whether $C^\star$ takes the value \na \ or not, depends on variable $C^m$ and $B^o$, for the latter, via $\mbb{I}^C$. \boxtheorem
\end{example}

In general, we start with an observed DB, $D^\star$, possibly with MVs, and a MG, $\mc{M}$, that models the occurrence of MVs. Observed values are assumed to be certain, but MVs are   uncertain. We also assume $D^\star$ is compatible with the joint distribution induced by the MG.\footnote{Learning a MG and checking an instance against a MG are orthogonal problems to ours, and matter of ongoing separate research.}

We will extract \na-free and stochastically reliable information  from $D^\star$ by using the
combination of $D^\star$ and $\mc{M}$ to create  (virtually or physically) a {\em Block-Independent Probabilistic DB} (BID) \cite{suciu}.
This BID, in its turn, gives rise to  a collection $\mc{W}$ of {\em possible worlds}, each of them an \na-free DB instance for the underlying schema, with an associated global probability.   $\mc{W}$ represents the uncertainty due to the occurrence of MVs. Each possible world in it becomes a possible materialization -without MVs- of the partially observed underlying DB $D$. \ignore{Due to the implicit and stochastic imputation of MVs, possible worlds may have duplicate tuples, which becomes relevant in our setting.}

Our approach can be seen as a generalization of traditional {\em imputation techniques} \cite{gelman}, where instead of a single ``clean" instance, a class of probabilistically-weighted clean instances is considered. This makes for a more principled and uncertainty-aware data semantics and query answering (QA). Our way of dealing with MVs can be understood as a form of collective imputation  by means of  probability distributions over possible values; distributions that are related to each other through an underlying joint distribution determined by a MG. Those ``cell-level" distributions can be seen as footprints of that joint distribution.

Query answering (QA) can be done on the BID, which has a clear semantics \cite{suciu}. Without neglecting this direct approach,   we propose an alternative semantics that builds on the BID, but takes into account the {\em statistical compliance} of possible worlds with the distribution induced by the MG.   We start by collecting in different {\em classes} those possible worlds that are essentially the same, the  {\em matching worlds}. QA on matching worlds returns the same answer.  After that, we introduce a {\em measure of compliance} for classes of matching worlds. We  identify those classes that are maximally compliant, and do QA on top of them. \  More specifically, in this work we make the following contributions:

1. We define the semantics of a DB with  MVs in relation to a MG representing the missingness mechanisms at play. \ This is done through  a BID associated to the observed DB and the MG, giving rise to classes of matching possible worlds. Classes are used to define QA.

    2. We propose two data and QA semantics:  The \emph{Most-Probable Classes} (MPC), and the \emph{Most-Compliant Classes} (MCC). The latter has a statistical component in that compliance is measured as the distance between a world's empirical distribution and that induced by the MG.

    3. We investigate the computational complexity of the  QA-semantics. For a broad and common family of notions of compliance, one can compute a most-compliant class in polynomial time in the size of the observed DB, on which QA can be done. Furthermore, although there may be exponentially many classes,  enumerating all the most-compliant classes and query answers on them  can be done with polynomial-time delay.  We obtain several hardness results for the MCC-semantics. Still, its has better computational properties than the MPC-semantics.

This paper is structured as follows. Section~\ref{sec:prel} provides background. Section~\ref{sec:mgs} introduces MMs and MGs. Section \ref{sec:genBIDs} introduces the BIDs associated to an observed DB. Section~\ref{sec:queries} introduces classes of matching worlds and class-based QA. Section \ref{sec:compliance-pw} introduces the notion of compliance.  In Section \ref{sec:complexity}, we investigate algorithmic and complexity aspects of class computation and QA. \ignore{Section \ref{sec:balanced} investigates algorithmic aspects of a particular case of generated BIDs.} Section \ref{sec:relWork} discusses related work; and    Section~\ref{sec:conclusion}  summarizes our contributions and outlines directions of future work. The Appendix  provides additional material and proofs of results.

\vspace{-3mm}
\section{Background and\\ Preliminaries}
\label{sec:prel}

\vspace{-2mm}
\paragraph{Relational Databases.}
A relational schema, usually denoted with $S$,  is a finite set of logical predicates, usually denoted with $R$,  with fixed arities. Variables, a.k.a. attributes or features, are associated to the predicate positions. A relational DB, usually denoted with $D$,  is a collection of relations which are finite extensions for the predicates. Their elements are called tuples; and their values come from fixed and finite domains. $\nit{dom}(X)$ denotes the domain of an attribute $X$. For several attributes, $\bar{X}$, $\nit{dom}(\bar{X})$ denotes the cartesian product of the individual domains.  We  use $\nit{dom}$ for the union of the domains. \ The string \na, indicating a missing value, \ does not belong to any of these domains. $\nit{dom}^{\!\star}$ denotes $\nit{dom} \cup \{\na\}$.
 \ Unless otherwise stated, DB relations may have duplicates, i.e. repeated tuples, which we tell apart by means of global {\em tuple identifiers} (tids) that appear in a first attribute of tuples, acting as a surrogate key, as shown  in Example \ref{ex:first}. We denote tids with $\tau, \tau_1, \ldots$. We will frequently denote and refer to tuples by their tids. \ignore{Beware: With $\tau\!\!\!\downarrow$ we denote the projection of the tuple identified by $\tau$ on the non-tid attributes. In Example \ref{ex:first}, $\tau_2\!\!\downarrow = \langle a_2, 1, \na\rangle$.}

 We concentrate on  {\em conjunctive queries} (CQs), i.e. ``project-join-select" queries,  \ignore{Beware: existential quantifications of conjunctions of $L(S)$-atoms, possibly with built-ins.  Boolean CQs (BCQs) have all variables  quantified.} and aggregations over CQs. \ignore{Beware:  scalar or with group-by, with aggregation functions $\nit{sum}, \nit{avg}, \nit{count}, \nit{count*}, \nit{max}, \nit{min}$.} Queries may have constants from $\nit{dom}$, and then, different from \na.
The set of answers to a query $\mc{Q}$ from a DB $D$ is denoted with $\mc{Q}[D]$.
Boolean queries (BQ) have $0$ or $1$ answers.

\vspace{-2mm}
\paragraph{Probabilistic Databases.}
  Given a relational DB, $D$, a PDB, $D^p$ associated to $D$ is a {\em collection $\mc{W}$ of possible worlds} that are the subinstances $W$ of  $D$. Each $W \in \mc{W}$ has a  probability $p(W)$, such that  $\sum_{W\in\mc{W}}p(W) = 1$ \cite{suciu}. $D^p$  becomes a discrete {\em probability space} $\langle \mc{W}, p\rangle$, with $p$ probability distribution on $\mc{W}$. \ The probability of tuple $\tau$ being in $D$ (as seen through $D^p$) is  \
$P(\tau) \ := \sum_{\tau \in W \in \mc{W}} p(W)$.
\ A numeric query $\mc{Q}$ on $D$ becomes a random variable on  $\mc{W}$; with Bernoulli distribution if it is a BQ,  in which case, the probability of $\mc{Q}$ (being true) is \
$P(\mc{Q})  := P(\mc{Q} = 1) :=
\sum_{W \in \mc{W}: \ W \models \mc{Q}} p(W)$. \ignore{
If $\mc{Q}(\bar{x})$ is an open query, and $\bar{a}$, a sequence of constants, \ $P(\bar{a}) := P(\mc{Q}[\bar{a}])$.
\ Under this semantics,  each answer comes with a probability (of being an answer).}

\vspace{-0.5cm}
\begin{center}
\captionof{table}{A BID.}\label{tab:pdbs}
{\scriptsize
$\begin{tabu}{c|c|c|c||c|}\hline
R & A & B & C& P \\ \hline
\tau^1_1& a_1 & b_1& c_1&p_1^1\\
\tau^2_1 & a_2& b_2& c_2&p_1^2\\
\hline
\tau^1_2 & a_3& b_3& c_3&p_2^1\\
\tau^2_2& a_4& b_4& c_4&p_2^2\\
\tau^3_2& a_5& b_5& c_5&p^3_2\\
\hline
\end{tabu}$}
\end{center}

\vspace{-2mm}In our work we are interested in {\em block-independent} PDBs (BIDs) \cite{suciu}.  Tuples have probabilities, and tuples of a same relation are separated in mutually independent blocks of mutually exclusive tuples. Tuples in a block are stochastically independent from tuples in other blocks. Table \ref{tab:pdbs} shows a BID $R$ with two blocks.

 In a possible world $W \in \mc{W}$, the corresponding  relation $R_W$ is built from relation $R$,
 by choosing from each block $\mc{B}_j$ exactly one tuple $\tau_j^i$ . The probability associated to $R_W$ is defined by: \ $p(R_W) := \Pi p_j^i$. Any other kind of subrelation has probability $0$. The probability of a world $W$ is the product of the probabilities of its relations. A  \emph{most probable database} (MPDB) is a possible world with the highest probability. Within a block, the sum, $p$, of the tuples' probabilities is not greater than $1$. In case none of the tuples is chosen from the block, the block contributes with the factor $(1-p)$  to the relation's probability. A {\em tuple-independent PDB} (TID) is a particular kind of BID, where each block has a single tuple with a probability that can be less than $1$.

\vspace{-2mm}
\section{Missingness and\\ Observed DBs}\label{sec:mgs}

 \vspace{-2mm}A Missingness Mechanism (MM)
 specifies how the occurrence of MVs in a variable depends on the values that other (or the same) variables take.  \
There are some MMs that commonly  found in practice \cite{rubinBook}. Following  \cite{mohan21}, MMs are represented by {\em Missingness Graphs} (MGs), i.e. Bayesian Networks whose  directed edges capture causal or stochastic-dependence directions. Accordingly, MGs become directed acyclic graphs subject to the usual {\em Markov Condition}: Given its parents, a variable is independent from its non-descendant variables \cite{darwicheCommACM,pearl}.

\ignore{Beware:
\vspace{1mm}
\noindent {\bf (A)} \ The MVs for attribute $A^\star$  appear when, while we go down the column, we throw a die. If the subindex, $n$, of $a_n$ is equal to the number on the die (modulo 6), we record \na.  This is {\em Missingness Completely at Random} (MCAR) for $A^\star$.

\vspace{1mm}
\noindent {\bf (B)} \  The occurrence of MVs for $C^\star$ depends on the values that the fully observed variable $B^o$ takes. For example, we throw a die, but only when $B^o$ takes the value $1$. If the die shows an odd number, we record \na \ for $C^\star$. This is {\em Missingness at Random} (MAR) for $C^\star$.

\vspace{1mm}
\noindent {\bf (C)} \ In addition, and as opposed to the other two, we may also have {\em Missingness Not Completely at Random} (MNCAR). Under this mechanism, very complex  cases of dependencies may occur that determine the occurrence of MVs. A relevant particular case is that of {\em self-censorship}, where a variable determines its own MVs. For example,  a salary (randomly) shows as a MV  when the salary is high.

\subsection{Representing MMs and Data with MVs}
}

\vspace{-3mm}
\begin{example}  \label{ex:mgs} \ignore{(Ex. \ref{ex:second} cont.)
As in \cite{mohan21}, we introduce an {\em missingness indicator variable} $\mathbb{I}^C$ that takes the value $1$ exactly when $C$ is not fully observed, and $0$, otherwise. We denote with $C^m$ this true, partially observed variable. \  $C^\star$ is its observed version, i.e. that in Example \ref{ex:first}, including the \na's. All we see about $C^m$ is what  $C^\star$ shows: When $C^\star$ shows a non-\na \ value, it is the true value of $C^m$. When $C^\star$ shows \na, we do not know the true value of $C^m$.
\  The fully observed variable $B$ is usually denoted with $B^o$.}  The MG in Figure \ref{fig:mg}(a) corresponds to the MCAR case (missing completely at random) in that variable $\mathbb{I}^C$ does not depend on any other variable, and has absolute probabilities of taking values $0$ or $1$.

\vspace{-5.5mm}
 \begin{center}   \includegraphics[width=6.5cm]{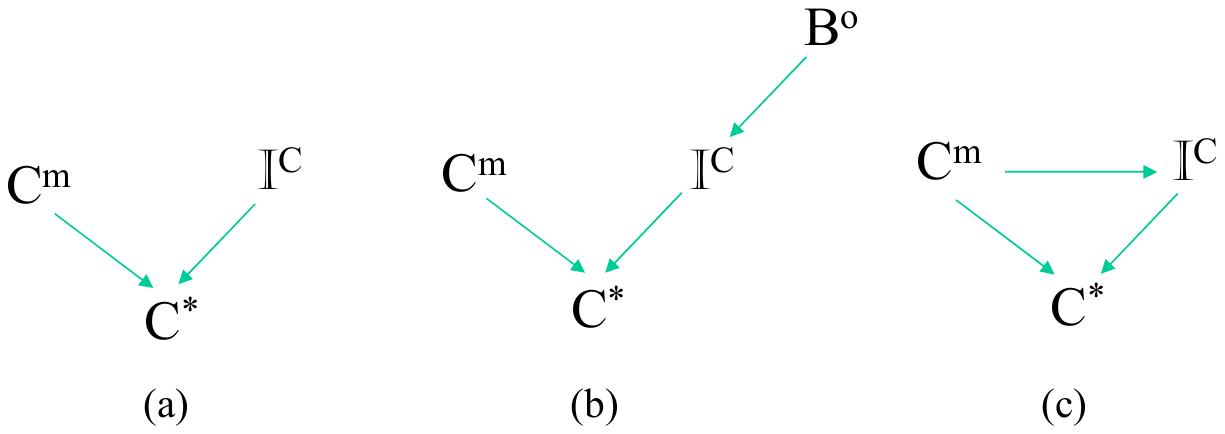}
\end{center}
\vspace{-6mm}\captionof{figure}{Three Missingness Graphs.}\label{fig:mg}

\vspace{2mm}
The MG in Figure \ref{fig:mg}(b) shows a case of MAR (missing at random):   $C^\star$ stochastically depends on $C^m$ and $\mathbb{I}^C$; while the latter depends on $B^o$.
  As expected, $C^\star$ depends directly on its real, underlying version, $C^m$, and its own  indicator variable. \ignore{We can even add a  {\em structural equation} at the sink node: \ $C^\star = \na \mbox{ when } \mathbb{I}^C=1$, and $C^\star = C^m, \mbox{ otherwise}$.} \ The BN also has the distributions: \ $P(C^m)$, $P(B^o)$, $P(\mathbb{I}^C|B^o)$, $P(C^\star|C^m,\mathbb{I}^C)$. Furthermore, due to the  Markov condition, absolute and conditional independences hold:
$B^o \independent C^m, \ \ \mathbb{I}^C \independent C^m, \ \ (C^\star \independent B^o)|\mathbb{I}^C$.

Finally, the MG in Figure \ref{fig:mg}(c) shows a case of MNCAR (missing not completely at random); actually of {\em self-censorship}:  $\mathbb{I}^C$ depends on $C^m$. For example, $C^m$ could be $\nit{Salary}^m$, that -with hight probability- is not reported when it is very high. When $\mbb{I}^\nit{Salary}$ takes value $1$, $\nit{Salary}^\star$ takes value $\na$. Otherwise, $\nit{Salary}^\star$ takes the real value of $\nit{Salary}^m$.  The BN has  absolute and conditional distributions: $P(C^m), P(\mathbb{I}|C^m), P(C^\star|C^m,\mathbb{I}^C)$.
\boxtheorem \end{example}

\vspace{-3mm}In a MG, each variable $X^\star$ is a sink, and has exactly two parents:  $X^m$ and $\mathbb{I}^X$.  $\mathbb{I}^X$ might depend on variables of the forms $Y^o$ and $\mathbb{I}^Z$, for other variables of the form  $Z^m$.  Variables of the form $B^o$ are sources.
\ In a MG $\mc{M}$ the  probability distributions at its nodes determine a {\em joint distribution} $P^\mc{M}$ over all variables. \ignore{For example, the MG in Figure \ref{fig:mg}(b) has the factorized joint distribution
$P^\mc{M}(B^o,C^\star,\mathbb{I}^C,C^m) = P(C^\star|C^m,\mbb{I}^C) \times P(\mbb{I}^C|B^o) \times P(C^m) \times P(B^o)$.}   In particular, a tuple $\tau$ for the underlying schema $S$ has a marginal probability $P^\mc{M}(\tau)$.
 Furthermore,  MGs satisfy the following conditions, where,  $c, c^\prime \in \nit{dom}(X^m)$ are arbitrary, but different:\ignore{\footnote{In first equations, we could use instead a value close to $1$ or $0$. Alternatively, we could use causal networks \cite{mohan21}, with deterministic {\em structural equations} at source nodes, e.g. ``$X^\star = \na \ \mbox{ {\em if} } \ \mbb{I}^X=1, \mbox{ {\em and} } \ X^\star = X^m \ \mbox{ {\em  if} } \ \mbb{I}^X=0$".}}\vspace{-6mm}

{\footnotesize \begin{eqnarray} P^\mc{M}(X^\star=\na~|~\mbb{I}^X=1,X^m=c, \ldots) &=& 1, \label{eq:ass1} \\  P^\mc{M}(X^\star=\na,\ldots~|~\mbb{I}^X=0,X^m=c,\ldots) &=& 0, \label{eq:ass2} \\
P^\mc{M}(X^\star=c^\prime,\ldots~|~\mbb{I}^X=0,X^m=c,\ldots) &=& 0,  \label{eq:ass2prime} \\
 P^\mc{M}(X^\star=c~|~\mbb{I}^X=0,X^m=c, \ldots) &=& 1, \label{eq:ass3}
 \end{eqnarray}}

\vspace{-9mm}{\footnotesize \begin{eqnarray}
 P^\mc{M}(X^\star=\na, \ldots\hspace{-3mm}&|&\hspace{-3mm}\mbb{I}^X=1,X^m=c,\ldots) = \nonumber\\ &&P^\mc{M}(\ldots~|~\mbb{I}^X=1,X^m=c, \ldots), \label{eq:ass4}\\
 P^\mc{M}(X^\star=c, \ldots&\hspace{-3mm}|&\hspace{-3mm}\mbb{I}^X=0, X^m=c, \ldots)  = \nonumber\\
 &&P^\mc{M}(\ldots~|~\mbb{I}^X=0,X^m=c, \ldots). \label{eq:ass5}
\end{eqnarray}}

\vspace{-6mm}
\begin{example} \label{ex:new} \ (Ex. \ref{ex:second} cont.) \ Consider the observed DB $R^\star$ in Table \ref{tab:dbex0}(a) for the observed schema $S^\star$.  MVs in it are governed by the MG $\mc{M}$ in Example \ref{ex:second}.
On the basis of the local distributions $P(A^o),$ $ P(C^\star|C^m,\mbb{I}^C), P(\mbb{I}|B^o), P(B^o)$, and  the Markov condition, $\mc{M}$ induces the factorized joint distribution:

\vspace{-6mm}
{\scriptsize \begin{eqnarray}
P^\mc{M}(A^o,B^o,C^\star,\hspace{-3mm}&&\hspace{-3mm}\mbb{I}^C,C^m)=P^\mc{M}(C^\star~|~C^m,\mbb{I}^C) \times P^\mc{M}(I^C~|~B^o)  \nonumber \\ &&\times P^\mc{M}(C^m)    \times \ P^\mc{M}(B^o) \times P^\mc{M}(A^o). \label{eq:joint}
\end{eqnarray}}

\vspace{-4mm}
\captionof{table}{(a) Observed  \ (b) Underlying  (c) Expanded DBs.}\label{tab:dbex0} \vspace{-2mm}
\begin{center}
 {\tiny
 $\begin{tabu}{c|c|c|c|}\hline
 R^\star & A^o & B^o & C^\star\\ \hline
\tau_1& a_1 & 0& c_1\\
\tau_2& a_2& 1& \na\\
\tau_3&a_3 & 0&c_3\\
\tau_4& a_4&0&c_4\\
\tau_5& a_5&1&c_5\\
\tau_6&a_6&1& \na\\ \hhline{~---}
\end{tabu}$~~~~~~~~~
$\begin{tabu}{c|c|c|c|}\hline
 W & A^o & B^o & C^m\\ \hline
 \tau_1& a_1 & 0& c_1\\
 \tau_2& a_2& 1& \underline{c_5}\\
\tau_3&a_3 & 0&c_3\\
\tau_4& a_4&0&c_4\\
\tau_5& a_5&1&c_5\\
\tau_6&a_6&1& \underline{c_2}\\ \hhline{~---}
 \end{tabu}$

 \vspace{1mm}
$\begin{tabu}{c|c|c|c|c|c|c|c|}\hline
R^\nit{ex} & A^o  & B^o & C^\star & \mathbb{I}^C&C^m\\ \hline
\tau_1& a_1 &  0& c_1 & 0&c_1\\
\tau_2& a_2&  1& \na & 1&c_5\\
\tau_3&a_3  &  0&c_3&0&c_3\\
\tau_4& a_4 & 0&c_4 &0&c_4\\
\tau_5& a_5 & 1&c_5 & 0&c_5\\
\tau_6&a_6& 1& \na &1&c_2\\ \hhline{~------}
\end{tabu}$}
 \end{center}

\vspace{-1mm}Table \ref{tab:dbex0}(b) shows $W$ as a possible \na-free underlying instance for the underlying schema $S$, where the true values $c_5$ and $c_2$ (underlined) for $C^m$ where not all observed, giving rise to $R^\star$.   The ``expanded" relation $R^\nit{ex}$ in Table \ref{tab:dbex0}(c), for an {\em expanded schema} $S^\nit{ex}$, shows a possible sample from the outside reality represented by $\mc{M}$, with values for all variables  in it. $R^\star$ and $R$ can be seen as {\em footprints} of $R^\nit{ex}$.     A ``user" has access only to the observed instance $R^\star$, but queries will be posed in the language  of the underlying schema $S$ (see Section \ref{sec:queries}).
 \boxtheorem
\end{example}

 \vspace{-4mm}{\em We will assume that in an observed instance $D^\star$ all the tuples are stochastically independent from each other, as obtained from an independent sample of the real world}.
In particular, $D^\star$ may have duplicate tuples. Furthermore,  we will assume that an observed DB $D^\star$ is {\em compliant} with the given MG $\mc{M}$. In particular,   $D^\star$ satisfies both the (in)dependencies represented by the MG and its induced distributions.

A possible underlying DB, such as $W$ in the previous example,  may also introduce  duplicates, when \na's are replaced by domain values. \ignore{For now, we make no assumption about the compliance of such an instance $W$ with $\mc{M}$. We retake this issue in Section \ref{sec:compliance-pw}.}

\vspace{-2mm}
\section{BIDs for Missing\\ Values}\label{sec:genBIDs}

\vspace{-2mm}
 We can see observed DB $D^\star$, together with a MG $\mc{M}$, as a single representation for a set, $\mc{W}(D^\star)$, of {\em possible worlds}, namely, of {\footnotesize \na}-free databases, $W$, for the underlying schema $\mc{S}$. For this, we start by replacing values from $\nit{dom}$ for each occurrence of an   \na \ in $D^\star$. Then,  each tuple with MVs in $D^\star$ gives rise to a  ``block" of  ``potentially observed" tuples, without \na, each of them with a probability. We obtain a BID, $D^p(D^\star,\mc{M})$. The following example shows this; in particular, how we assign probabilities to the tuples in a block. The general construction can be found in Definition \ref{def:blocks} in the Appendix.

\vspace{-3mm}
\begin{example}  (Ex. \ref{ex:new} cont.) \label{ex:new2}  Assume $\nit{dom}(C)= \{c_1, \ldots, c_5\}$, Table \ref{tab:dbex0}(a) gives rise to the relation in Table \ref{tab:blocks}(a).  The first tuple, fully observed, gives rise to a {\em certain} block with one tuple,  with probability $1$.

\captionof{table}{(a) Blocks of tuples,$\qquad$ (b) A BID.}\label{tab:blocks}
 \vspace{-2mm}\begin{center}
 {\tiny
$\begin{tabu}{rr|c|c|c|}\hhline{~----}
&R & A^o & B^o & C^m\\ \hhline{~----}
&\tau_1& a_1 & 0& c_1\\
\hhline{-----}
&\tau_2^1& a_2& 1& c_1\\
&\tau_2^2& a_2& 1& c_2\\
\mc{B}(\tau_2)\hspace*{-2mm}&\tau_2^3& a_2& 1& c_3\\
&\tau_2^4& a_2& 1& c_4\\
&\tau_2^5& a_2& 1& c_5\\
\hhline{-----}
&\cdots& \cdots&\cdots&\cdots
\\ \hhline{~~---}
\end{tabu}~~\begin{tabu}{c|c|c|c||c|}\hline
R^p & A^o & B^o & C^m& p^\nit{BID}\\ \hline
\tau_1& a_1 & 0& c_1&1\\
\hline
\tau_2^1& a_2& 1& c_1&p_2^1\\
\tau_2^2& a_2& 1& c_2&p_2^2\\
\tau_2^3& a_2& 1& c_3& p_2^3\\
\tau_2^4& a_2& 1& c_4&p_2^4\\
\tau_2^5& a_2& 1& c_5&p_2^5\\
\hline
\cdots& \cdots&\cdots&\cdots& \cdots
\\ \hhline{~----}
\end{tabu}$
}
\end{center}

\vspace{-1mm}
Tuple $\tau_2$ in $R^\star$ gives rise to  a block, $\mc{B}(\tau_2)$, of five tuples, $\tau_2^1, \ldots, \tau_2^5$, with
 probabilities $p_2^1:=p^{\nit{BID}}(\tau_2^1), \ldots, p_2^5:= p^{\nit{BID}}(\tau_2^5)$; etc.  In this way, we obtain a BID as  in Figure \ref{tab:dbex0}(b).
\ In  block
$\mc{B}(\tau_2)$, the probability of tuple $\tau_2^1$ is defined on the basis of (or conditioned to)  the observed values in the same tuple (similarly for the other tuples):
{\footnotesize \begin{equation}p^{\nit{BID}}(\tau_2^1) := P^\mc{M}(C^m=c_1~|~A^o=a_2,B^o=1,C^\star=\na). \label{eq:bid-tuple}
\end{equation}}

By choosing one tuple per block, the BID in Table \ref{tab:blocks}(b)
gives rise to a set $\mc{W}(R^\star)$ of possible worlds, $W$, among them, those in Table \ref{tab:two},  where the underlined values are obtained by replacing {\footnotesize \na} \ by domain values.
Their probabilities are: \  $P^{\mc{W}}(R_1) = 1\times p_2^1 \times 1 \times 1 \times 1 \times p_6^5$, \ and \ $P^{\mc{W}}(R_2) = 1\times p_2^2 \times 1 \times 1 \times 1 \times  p_6^3$, resp. \boxtheorem
 \end{example}

\vspace*{-5mm}
\captionof{table}{Two Possible Worlds.}\label{tab:two}\vspace{-2mm}
\begin{center} {\tiny
$\begin{tabu}{c|c|c|c|}\hline
R_1 & A^o & B^o & C^m\\ \hline
\tau_1& a_1 & 0& c_1\\
\tau_2^1& a_2& 1& \underline{c_1}\\
\tau_3&a_3 & 0&c_3\\
\tau_4& a_4&0&c_4\\
\tau_5& a_5&1&c_5\\
\tau_6^5&a_6&1& \underline{c_5}\\ \hhline{~---}
\end{tabu}$
~~~~~
$\begin{tabu}{c|c|c|c|}\hline
R_2 & A^o & B^o & C^m\\ \hline
\tau_1& a_1 & 0& c_1\\
\tau_2^2& a_2& 1& \underline{c_2}\\
\tau_3&a_3 & 0&c_3\\
\tau_4& a_4&0&c_4\\
\tau_5& a_5&1&c_5\\
\tau_6^3&a_6&1& \underline{c_3}\\ \hhline{~---}
\end{tabu}$
}
\end{center}

\ignore{++
\begin{definition}
\label{def:blocks}
Consider an MG $\mc{M}$ and  an observed instance $D^\star$ for schema $S^\star$, and $R^\star$ a relation in $D^\star$ with schema $R(\bar{A}^o,\bar{A}^\star)$, where $\bar{A}^o, \bar{A}^\star$ are lists of fully observed and possibly taking \na \ attributes, resp.  Let $\tau$ be a tuple in $R^\star$, and $\tau[\bar{A}^o,\bar{A^\prime}^\star]$ its restriction to those attributes without an {\footnotesize \na}, with $\bar{A^\prime}^\star \subseteq \bar{A}^\star$.

\vspace{1mm}\noindent
(a) The {\em block} associated to $\tau$ is the set of tuples for schema $R(\bar{A}^o,\bar{A}^m)$: \
$\mc{B}(\tau) := \{  \tau^\prime~|~\tau^\prime[\bar{A}^o,\bar{A^\prime}^\star] = \tau[\bar{A}^o,\bar{A^\prime}^\star], \mbox{ and},\mbox{ for each } A \in (\bar{A}^\star \smallsetminus  \bar{A^\prime}^\star),  \tau^\prime[A] \in \nit{dom}(A^m)  \}$.

\noindent (b)
For $\tau^\prime \in \mc{B}(\tau)$, its  probability $p(\tau^\prime)$ is the conditional probability:
$p^{\nit{BID}}\!(\tau^\prime) := P^\mc{M}(\tau^\prime[\bar{A}^\star \smallsetminus  \bar{A^\prime}^\star]~|~\tau)$.
The probability of the tuple in a singleton block  is $1$.

\vspace{1mm}\noindent (c)
$D^p\!(D^\star,\mc{M})$ denotes the BID whose relations $R^p$ contain the blocks $\mc{B}(\tau)$ for $\tau \in R^\star$, and each tuple $\tau^\prime \in R^p$ has probability $p^{\nit{BID}}(\tau^\prime)$.

\vspace{1mm}\noindent (d)
A {\em possible world} associated to $D^\star$ is an instance $W$ for the underlying  schema $S$, with relations $R^W$ that contain, for each $\tau \in R^\star$, only one $\tau^\prime \in \mc{B}(\tau)$, and nothing more. \ $\mc{W}(D^\star)$ denotes the set of possible worlds. \boxtheorem
\end{definition}++}

  A {\em possible world} associated to $D^\star$ is an instance $W$ for   schema $S$, with relations $R^W$ that contain, for each $\tau \in R^\star$, only one $\tau^\prime \in \mc{B}(\tau)$, and nothing more.
 Possible worlds  may contain duplicates which we tell apart via tids. In Example \ref{ex:first}, duplicates are generated from $\tau_2$ and $\tau_6$ by replacing \na \ by $c_3$ in the former, and by $a_2$ and  $c_3$ in the latter.

\vspace{-1mm}
\begin{example}\label{ex:connection}  \ (Ex. \ref{ex:new2} cont.) \ It is easy to check that, starting from (\ref{eq:bid-tuple}), the  probabilities $p_2^i$ of the tuples $\tau_2^i$,  \ $i=1,\ldots,5$, in block $\mc{B}(\tau_2)$ in Table \ref{tab:blocks}(b) are as follows:

\vspace{-5mm}
{\footnotesize
\begin{eqnarray*}
    p^\nit{BID}(\tau_2^i)\!\!&=&\!\!\frac{P^\mc{M}(A^o=a_2,B^o=1,C^m=c_i, \mbb{I}^C=1)}{P^\mc{M}(A^o=a_2, B^o=1,\mbb{I}^C=1)}\\ &=& P^\mc{M}(C^m =c_i).
\end{eqnarray*}}

\vspace{-9mm}\boxtheorem
\end{example}

\vspace{-2mm}
Each possible world $W$ of the BID becomes one version of the observed  DB $D^\star$ obtained via multiple imputation \cite{gelman}. \ignore{Accordingly, our approach can be seen as  doing {\em multiple imputation} \cite{allisonMI,gelman}, with all the -possibly implicitly and virtually- generated instances are simultaneously queried, as we show now.}

Since each possible world $W$ of the BID is an instance for the underlying schema $S$, a query $\mc{Q}$ posed to the BID  will be expressed in language of the underlying schema, using only attributes of the forms $A^o$ and $A^m$.
 {\em Then, by definition, {\em posing a query $\mc{Q}$ to the observed database $D^\star$} means querying the associated BID $D^p(D^\star,\mc{M})$.}

It is easy to see that the number of possible worlds for  $D^p(D^\star,\mc{M})$ can be exponential in the size of $D^\star$. Furthermore, the QA problem is bound to be hard in data complexity, because
the problem of computing the probability of BQ on a  TID \cite{suciu} can be reduced in polynomial-time in the data to QA under an observed DB with its MG.

\vspace{-3mm}
    \begin{theorem} \label{thm:sharp} Conjunctive query answering on observed databases with missing values via the generated BIDs is $\#P$-hard. \boxtheorem
         \end{theorem}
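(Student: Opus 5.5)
We reduce from the $\#P$-hard problem of computing $P(\mc{Q})$ for a Boolean conjunctive query $\mc{Q}$ over a tuple-independent PDB (TID) \cite{suciu}. The canonical choice is the non-hierarchical query $\mc{Q}_0 = \exists x \exists y\,(R(x) \wedge S(x,y) \wedge T(y))$, which is $\#P$-hard in data complexity even when all tuple probabilities are $1/2$. Given such a TID $D^T$, we build in polynomial time an observed DB $D^\star$, a MG $\mc{M}$ and a Boolean CQ $\mc{Q}'$ over the underlying schema. We arrange that the probability of $\mc{Q}'$ on $D^p(D^\star,\mc{M})$ equals $P(\mc{Q}_0)$ on $D^T$.

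\textbf{Encoding.} Each TID relation, say $R(x)$, is encoded by an underlying relation $R'(X^o, E^m)$. Here $X^o$ is fully observed and $E^m$ is a Boolean ``existence'' attribute with $\nit{dom}(E^m)=\{0,1\}$. Every tuple $R(a)$ of $D^T$ becomes an observed tuple $\langle a, \na\rangle$ in $R'^\star$. The query $\mc{Q}'$ is obtained from $\mc{Q}_0$ by replacing each atom $R(x)$ with $R'(x,1)$, and similarly for $S$ and $T$.

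For the MG we take the MCAR graph of Figure \ref{fig:mg}(a): $E^m$ is independent of everything, $\mbb{I}^E$ is a source, and $E^\star$ has parents $E^m,\mbb{I}^E$ satisfying (\ref{eq:ass1})--(\ref{eq:ass5}). As in Example \ref{ex:connection}, each tuple's block then has two tuples, $\langle a,0\rangle$ and $\langle a,1\rangle$, with $p^\nit{BID}(\langle a,1\rangle)=P^\mc{M}(E^m=1)$.

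Setting $P^\mc{M}(E^m=1)=1/2$ reproduces the uniform-$1/2$ TID. Each block independently chooses whether the encoded tuple is ``present'' (value $1$) or ``absent'' (value $0$). This gives a probability-preserving bijection between possible worlds of the BID and subinstances of $D^T$. Under this bijection $W\models \mc{Q}'$ iff the corresponding subinstance satisfies $\mc{Q}_0$, so $P(\mc{Q}')=P(\mc{Q}_0)$. If arbitrary rational tuple probabilities are needed, we use one MG per relation, or add an observed attribute $B^o$ on which $E^m$ depends, to realize distinct probabilities. Under MCAR, though, the block probability is simply the marginal of $E^m$, so the uniform case suffices for hardness.

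\textbf{Main obstacle.} The delicate step is reconciling this construction with the standing assumptions, namely the conditions (\ref{eq:ass1})--(\ref{eq:ass5}) and compliance of $D^\star$ with $\mc{M}$.

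\emph{Conditions on $\mc{M}$.} An all-\na\ column requires $P(\mbb{I}^E=1)>0$. We set $P(\mbb{I}^E=1)$ to some value in $(0,1)$, which is harmless since blocks condition on $\mbb{I}^E=1$.

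\emph{Compliance of $D^\star$.} If compliance demands that observed frequencies match the MG, we can pad $D^\star$ with fully observed dummy tuples over fresh constants. These do not affect $\mc{Q}'$, because they use constants never joined by the query. Alternatively, we treat compliance as irrelevant to the semantics of the BID itself.

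With these points settled, the reduction is polynomial in $|D^T|$, and the $\#P$-hardness of the TID problem transfers to CQ answering on $D^\star$ via its generated BID.
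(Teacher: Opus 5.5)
Your proposal is correct and is essentially the paper's argument (Proposition~\ref{prop:complexity} with Example~\ref{ex:reduc}): each TID tuple receives an extra Boolean attribute that is always \na, the MG fixes the block probabilities, and the query is rewritten with the constant $1$ in that position, which gives a probability-preserving bijection of worlds. The one real difference is that the paper lets that attribute depend on the observed attributes so as to realize arbitrary per-tuple probabilities, whereas your fixed MCAR graph with $P^{\mc{M}}(E^m=1)=1/2$ needs hardness of $\mc{Q}_0$ with \emph{all} tuples at $1/2$; this is true, but it is the later uniform-reliability result of Amarilli and Kimelfeld rather than the classical Dalvi--Suciu reduction (where the $S$-tuples have probability $1$), and keeping the $S$-tuples fully observed, as probability-$1$ singleton blocks, would let you use the classical result directly.
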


  \vspace{-4mm}   Instead of going deeper into the investigation of general QA on the resulting BIDs, we will propose, starting in Section \ref{sec:queries},  an alternative QA semantics. The presence of duplicates will be particularly important.  \ignore{In this work, we will mostly work with a bag semantics.}
\ignore{Beware:We end this section with an important observation.
\vspace{-3mm}
\begin{remark} \label{rem:obs}One could wonder if any BID can be the result from an observed DB with a MG. The answer is positive, and relevant to establish some complexity results later in this work. Due to its auxiliary nature, we relegate details of this result to Appendix \ref{app:charac}. \boxtheorem
\end{remark}
}

\vspace{-1mm}
\section{Class-Based Data \\ Semantics}
\label{sec:queries}

\vspace{-1mm}
Our data QA semantics, that builds upon the classic one for BIDs, leverages the stochastic and statistical origins of the observed DB $D^\star$.  We adopt a two-dimensional perspective: \ (a) Each possible world is associated with a probability that reflects its stochastic uncertainty. This is what we have so far. \  (b) The second dimension is of a statistical nature: A notion of \emph{compliance}  quantifies how well a possible world aligns with the joint distribution induced by the MG. This second dimension is developed in detail in Section \ref{sec:compliance-pw}. In this section, we prepare the ground and bring up some relevant issues by means of our running example.

  \vspace{-2mm}
\captionof{table}{(a) Obs. $D^\star$,  (b) BID $D^p$,  (c) MPD w/prob. $(\frac{1}{2})^3$.}\label{fig:two} \vspace{-1mm}
 \begin{center}
 {\tiny   $\begin{tabu}{c|c|c|l|}
\hhline{~---}
&A^o & B^o & C^\star \\ \hhline{~---}
\tau_1 &a & 0  & 0  \\ \hhline{~---}
\tau_2 &a & 0  & 0  \\ \hhline{~---}
\tau_3&a & 1  & \na \\ \hhline{~---}
\tau_4&a & 1  & 0  \\ \hhline{~---}
\tau_5&a & 1  & \na \\ \hhline{~---}
\tau_6&a & 1  & 1  \\ \hhline{~---}
\tau_7&a & 1  & \na \\ \hhline{~---}
\tau_8&a & 1  & 2  \\ \hhline{~---}
\end{tabu}$}~~~~~~~~{\tiny
\begin{tabular}{l|l|l|l||l|}
\cline{2-5}
              & $A^o$ & $B^o$ & $C^m$ & $p^\nit{BID}$ \\ \hline
$\tau_1$                  & $a$ & 0  & 0  & 1 \\ \hline
$\tau_2$                  & $a$ & 0  & 0  & 1 \\ \hline
\multirow{3}{*}{$\mc{B}(\tau_3)$} & $a$ & 1  & 0  &  $1/2$ \\ \cline{2-5}
                    & $a$ & 1  & 1  & $1/4$   \\ \cline{2-5}
                    & $a$ & 1  & 2  & $1/4$  \\ \hline
$\tau_4$                  & $a$ & 1  & 0  & 1 \\ \hline
\multirow{3}{*}{$\mc{B}(\tau_5)$} & $a$ & 1  & 0  & $1/2$  \\ \cline{2-5}
                    & $a$ & 1  & 1  &  $1/4$ \\ \cline{2-5}
                    & $a$ & 1  & 2  & $1/4$  \\ \hline
$\tau_6$                  & $a$ & 1  & 1  & 1 \\ \hline
\multirow{3}{*}{$\mc{B}(\tau_7)$} & $a$ & 1  & 0  & $1/2$  \\ \cline{2-5}
                    & $a$ & 1  & 1  &  $1/4$ \\ \cline{2-5}
                    & $a$ & 1  & 2  &  $1/4$ \\ \hline
$\tau_8$                  & $a$ & 1  & 2  & 1 \\ \hline
\end{tabular}}

\vspace{1mm}
{\tiny $\begin{tabu}{c|c|c|c|}
\hhline{~---}
&A^o & B^o & C^m \\ \hhline{~---}
\tau_1&a & 0  & 0 \\ \hhline{~---}
\tau_2&a & 0  & 0 \\ \hhline{~---}
\tau_3^1&a & 1  & 0 \\ \hhline{~---}
\tau_4&a & 1  & 0 \\ \hhline{~---}
\tau_5^1&a & 1  & 0 \\ \hhline{~---}
\tau_6&a & 1  & 1 \\ \hhline{~---}
\tau_7^1&a & 1  & 0 \\ \hhline{~---}
\tau_8&a & 1  & 2 \\ \hhline{~---}
\end{tabu}$}
\end{center}

\vspace{-2mm}
\begin{example}  \label{ex:aggre} (Ex. \ref{ex:connection}  cont.)
  Consider the MG in Example \ref{ex:second}, the observed DB in Table \ref{fig:two}(a), and $\nit{dom}(C^m) = \{0,1,2\}$. \
   Assume: $P^{\mc{M}}(C^m = 0)=\frac{1}{2}$, $P^{\mc{M}}(C^m =1)= P^{\mc{M}}(C^m = 2) =\frac{1}{4}$, that, by Example \ref{ex:connection},  is all we need to build the BID in Table \ref{fig:two}(b). Table \ref{fig:two}(c) shows the most-probable possible world, $W_1$ in Table \ref{tab:ex-worlds}. It has duplicates.

   Table
\ref{tab:ex-worlds} shows all possible worlds. We use the  notation $W^v$ for possible worlds where $v$ is a vector of values from $C^m$'s domain of length $n$, the number of (non-singleton) blocks in the BID. Here, $n=3$. World $W^{[0,0,0]}$ is obtained by choosing $0$ for the MVs for $C^m$ in each of the three blocks of the BID. Similarly, $W^{[1,2,0]}$ is obtained by choosing $1$ for block 1; $2$ for  block 2; and $0$ for block 3. The last column of Table \ref{tab:ex-worlds} shows the probability $P^\mc{W}(W)$ of  each world $W$. \boxtheorem
\end{example}

\vspace{-7mm}
 \begin{center}
 \captionof{table}{Possible Worlds with Probabilities.}
     \label{tab:ex-worlds}
\tiny{
$
 \begin{array}{|c|c|l|}
 \hline
  \mathbf{World} &
 \mathbf{Notation} &  ~~~\mathbf{P^\mc{W}(W)}   \\ \hline
W_1 & W^{[000]} &  (1/2)^3=\mbf{0.125} \\ \hline
W_2 & W^{[001]} &  (1/2)^2\times 1/4=0.06  \\ \hline
W_3 &  W^{[002]}  &(1/2)^2\times 1/4=0.06  \\ \hline
W_4 &  W^{[010]}  & (1/2)^2\times 1/4=0.06 	  \\ \hline
 %
 %
\ldots &  \ldots  & \ldots  \\ \hline
 %
 %
 %
 %
 %
 %
W_{20} & W^{[201]} & (1/4)^2\times 1/2=0.03 \\ \hline
W_{21} &  W^{[202]}  & (1/4)^2\times 1/2=0.03 \\ \hline
W_{22}  & W^{[210]}  &(1/4)^2\times 1/2=0.03 \\ \hline
\ldots &  \ldots  & \ldots  \\ \hline
%
 %
W_{26}  & W^{[221]}  &(1/4)^3=0.015 	 \\ \hline
W_{27} & W^{[222]}  & (1/4)^3=0.015   \\ \hline
 \end{array}$}
\end{center}

\ignore{
\vspace{-2mm}
\subsection{A Class-Based Data and QA Semantics}\label{sec:classes}
}

Our running example shows that some  possible worlds coincide as multisets except for the tuples's tids. This is the case of $W^{[002]}$ and $W^{[020]}$. We can also obtain the same answer to a query from different possible worlds (see Example \ref{ex:aggre+}). \ignore{shows that the answer to the query $\nit{sum}(C^m)$ from a world $W^{[x_1,x_2,x_3]}$ is of the form $3 + x_1 + x_2 + x_3$, with $x_1,x_2,x_3 \in \{0,1,2\}$, chosen for the blocks $\mc{B}(\tau_3), \mc{B}(\tau_5)$ and $\mc{B}(\tau_7)$. Worlds $W^{[002]}$ and $W^{[110]}$, that make different imputation choices and do not coincide as multisets, produce the same aggregate value. All the worlds in a class return the same answer to a query.}

Given the {\em accidental} coincidence of worlds due to the implicit imputation process, we
will group together possible worlds into {\em classes of worlds}, each class containing the worlds that coincide as multisets modulo the tids. Clearly, worlds in a same class return the same answer to a query. \ignore{++ equivalence classes, on the basis of  their tuple contents as multisets or bags of tuples.  \ Definition \ref{def:matching} will bring into a same class any two worlds that contain exactly the same tuples, regardless of how they were derived. ++}

\vspace{-2mm}
\begin{definition} \label{def:matching} (a) Possible worlds $W,W^\prime \in \mc{W}(D^\star,\mc{M})$ obtained from a BID.  are said to be {\em matching} iff they become the same multiset instance when stripped from tids. \
(b) A {\em class of possible worlds}, $C$, is a maximal subset (under set-inclusion) of $\mc{W}$ where all  worlds in it are matching with each other. \  $\mc{C}(D^\star,\mc{M})$ denotes the collection of all classes. \
(c) The probability of a class $C$ is: \ $
P^\mc{C}\!(C) \ := \  \sum_{W \in C} P^\mc{W}\!(W)$. \
(d) $C$ is a {\em most probable class} (MP-class) if $P^\mc{C}(C)$ takes a maximum value in $\mc{C}(D^\star,\mc{M})$. \ $\nit{MP}(\mc{C}(D^\star,\mc{M}))$ denotes the collection of MP-classes.  \ (e) Given a query $\mc{Q}$ (in the in the language of  schema $S$), and a class $C \in \mc{C}(D^\star,\mc{M})$, the answer to $\mc{Q}$ from $C$ is \ $\mc{Q}[C] := \mc{Q}[W]$, with arbitrary $W \in C$, and its probability is $P^\mc{C}(C)$.
 \boxtheorem
\end{definition}

\vspace{-5mm}
\begin{example} \label{ex:aggre+}  (Ex. \ref{ex:aggre} cont.) Table \ref{tab:classes2prime} shows the different classes obtained from the worlds in Table~\ref{tab:ex-worlds};  and their probabilities. With the aggregate query $\mc{Q}\!\!: \! \nit{sum}(C^m)$ over the partially observed attribute $C^m$, we obtain the  answers in Table \ref{tab:classes2prime}.
\ignore{$
\{\langle \nit{sum} \! = \! 3; \ 0.125 \rangle,\ \langle \nit{sum} \! = \! 4; \ 0.1875 \rangle,\ \langle \nit{sum} \! = \! 5; \ 0.28125 \rangle,\ \langle \nit{sum} \! = \! 6; \ 0.203125 \rangle,\ \langle \nit{sum} \! = \! 7; \ 0.140625 \rangle,\ \langle \nit{sum} \! = \! 8; \ 0.046875 \rangle,\ \langle \nit{sum} \! = \! 9; \ 0.015625 \rangle\}$.} The probability of an answer  is the sum of the probabilities of the classes that return it.
 Answer  $\nit{sum} = 5$ is the most probable answer, with probability $0.28125$.
\boxtheorem
\end{example}

\vspace{-8mm}
\begin{center}
\captionof{table}{Classes and Answers  (bag semantics).}\label{tab:classes2prime}
\scriptsize{
 $\begin{tabu}{|c|c|c|c|c|}
 \hline
  \text{Classes}     & \text{Worlds}  & P^\mc{C}(C) &   \nit{sum}(C^m)[C]   \\ \hline
C_1 & W^{[002]},  W^{[020]},   W^{[200]} 	& 0.188 & 5 \\ \hline
C_2 & W^{[011]},  W^{[101]}, W^{[110]}  &	0.094  & 	 5 \\ \hline
C_3 &  	W^{[012]}, W^{[021]}, W^{[102]},  &0.188&  6\\
&W^{[120]},W^{[201]},W^{[210]}& &
\\ \hline
C_4 & W^{[111]} 	&0.016 &  6 \\ \hline
 C_5 & W^{[001]}, W^{[010]}, W^{[100]} &0.188	 & 4	\\ \hline
C_6 & W^{[022]}, W^{[202]}, W^{[220]} & 0.094	  & 7	\\ \hline
C_7 & W^{[112]}, W^{[121]}, W^{[211]}  & 0.047	  & 7	\\ \hline
C_8 & W^{[000]} 	 & 0.125   & 3\\ \hline
C_9 & W^{[122]}, W^{[212]}, W^{[221]} & 0.047  & 8	\\ \hline
C_{10} & W^{[222]}  & 0.016 4 & 9	 \\ \hline
 \end{tabu}$}
\end{center}

\ignore{
\vspace{-3mm}
\begin{example} \label{ex:sum} (ex. \ref{ex:aggre+} cont.)
  Table \ref{tab:classes2prime} shows the classes obtained from  the worlds in Table \ref{tab:ex-worlds}.
  \ The query answers  from the  classes are in the last column of Table \ref{tab:classes2prime}. All worlds in the same class yield the same answer for every query, under the set or bag semantics.
  \  Classes $C_1$ and $C_2$ have different probabilities. Worlds in a same class may have different global probabilities.
\boxtheorem
\end{example}
}

\vspace{-2mm}
\begin{remark} \label{rem:canon} (canonical representation of classes)
Given the BID $D^p(D^\star,\mc{M})$, with a set of $n$ blocks $\mc{B} = \{\mc{B}_1, \ldots, \mc{B}_n\}$ (including singletons), a class $C$ of worlds is determined by the multiplicities of the tuples they contain.  Let $T = \langle t_1,\ldots, t_m\rangle$, called the {\em support} of the BID, be the {\em vector of distinct} tuples appearing in the BID (without considering the tids).  So, $T$ has a fixed enumeration of tuples. (We will still use the notation $t \in T$ and $|T|$.)
Let $n_j$ denote the multiplicity of tuple $t_j$ across the blocks in $\mc{B}$, i.e. its  number of occurrences in the BID. Accordingly, a class $C$ is uniquely determined by a vector of positive integers $\mathbf{k}=\langle k_1, \ldots, k_m\rangle$, with $\sum_{j \in [1,m]} k_j=n$, where each $k_j \leq n_j$ is the number of occurrences of  $t_j$ in $C$. This class is denoted with $C_{\mathbf{k}}$. \ We denote with $\nit{adm}(\mathbf{k})$
 the fact that $\mathbf{k}$ is admissible, that is, there is a class $C_{\mathbf{k}}$ for $D^p(D^\star,\mc{M})$. When {\em searching} for such a $\mathbf{k}$, i.e. searching for a class with good properties, we have to check admissibility.
\boxtheorem
\end{remark}

\vspace{-4mm}
\begin{example} (ex. \ref{ex:aggre} cont.) \label{ex:support} Consider the BID in Table \ref{fig:two}(b). Here, $n = 8$ and $T = \langle(a,0,0),$ $ (a,1,0),$ $  (a,1,1), (a,1,2)\rangle$, in this order, with  $m=4$. For  $t_3 \ = (a,1,1)$,  $n_3 = 4$. Class $C_2$ contains $W^{[011]} = \{(a,0,0),$ $ (a,1,0), (a,0,0), (a,1,0), \ (a,1,1), \ (a,1,1),$ $ (a,1,2), (a,1,1)\}$, and its matching worlds (see Table \ref{tab:classes2prime}). $C_2$  is characterized by the vector $\mathbf{k} = \langle 2,2,3,1\rangle$, and denoted $C_{\langle 2,2,3,1\rangle}$. \boxtheorem
\end{example}

\ignore{\vspace{-4mm}
\begin{remark} \label{rem:admiss} Given a BID $D^p(D^\star,\mc{M})$ as in Remark \ref{rem:canon}, and a vector of integers $\mathbf{k}$, we denote with $\nit{adm}(\mathbf{k})$
 the fact that $\mathbf{k}$ is admissible, that is, there is a class $C_{\mathbf{k}}$ for $D^p$. In particular, $\sum_{j=1}^m k_j = n$.  When {\em searching} for such a $\mathbf{k}$, in essence searching for a class with good properties, we will assume admissibility can be checked (or, equivalently, that there are cardinality constraints in place that enforce it). We expect the BID to be clear from the context.\boxtheorem
\end{remark}}

\vspace{-2mm}
Now, we can consider a set $\mc{C}^\nit{pref} \subseteq \mc{C}$ of {\em preferred classes}, those with a desired property.  For example, we already have the {\em most-probable classes}. In Section \ref{sec:compliance-pw}, we will consider  {\em most-compliant classes}.  QA can be defined in general, on  an arbitrary set of preferred classes.

\vspace{-2mm}
\begin{definition} \label{def:preferred}  (Preferred QA-Semantics)  Given $\mc{C}^\nit{pref} \subseteq \mc{C}(D^\star,\mc{M})$ and a
relational query $\mc{Q}$: \   The {\em set of all preferred answers}  is:
\ $\nit{Ans}(\mc{Q},\mc{C}^\nit{pref}) \ := \ \{\langle \mc{Q}[C], P^\mc{C}(C)\rangle \ | \ C \in \mc{C}^\nit{pref\!}\}$.
\boxtheorem
\end{definition}

\vspace{-2mm}
Several computational problems arise in relation to this general formulation of QA, and the notion of preferred class.  Some problems will be presented in  Section \ref{sec:complexity}, where we will concentrate mostly on class-related computational problems, leaving QA aside, which is easier than computing classes with certain properties.

\section{Possible-\!World\\ Compliance} \label{sec:compliance-pw}

We can go beyond the purely probabilistic dimension by introducing a new and natural dimension for data and QA semantics: \textit{world compliance}, which we develop in this section. It quantifies how well a possible world conforms to the joint distribution induced by the MG. Actually, in the rest of this paper, we will concentrate on {\em world and class compliance}, leaving aside, due to the lack of space, the development of the probabilistic dimension for an extended version of this work.

Some possible worlds (or classes thereof)  may be {\em more compliant} than others w.r.t. the underlying MG $\mc{M}$, with compliance as a measure {\em statistical fidelity}. It can be cast as a {\em distance} between the data distribution of a world $W$ (and of those in its class) and the  distribution induced by $\mc{M}$.
\ In our running example,
some of the worlds that contribute to the most probable answer, $\nit{sum} = 5$, such as  $W^{[110]}$ and  $W^{[101]}$, may not be the most  compliant. \ignore{Later in this Section, we will define, as particular cases of Definition \ref{def:preferred}, some {\em preferred classes}, $\mc{C}^\nit{mc} \subseteq \mc{C}(D^\star,\mc{M})$,  that contain worlds that are most compliant.}

 In order to define compliance,  we start with the {\em empirical distribution} of a world (which is the same for all the worlds in its class). It is based on the multiset nature of worlds.

\vspace{-1mm}
 \begin{definition} \label{def:emp}
 Given a BID $D^p(D^\star,\mc{M})$, and its {\em support} $T$ (see Remark \ref{rem:canon}), the {\em empirical distribution}, $P_{W}^{\!E}$, of a possible world $W \in \mc{W}(D^\star,\mc{M})$ as a multiset is, for $t \in T$: $P_{W}^{E}(t) \! := \! \nit{mult}^W\!(t)/||W||$,
with $\nit{mult}^W\!(t)$ the multiplicity of $t$ in $W$, and $||W||$ is the bag-cardinality of $W$, i.e. counting duplicates.  \boxtheorem
\end{definition}

\vspace{-2mm}
Notice that the support $T$ does not have duplicates. A tuple in it that does not belong to a particular $W$ has empirical probability $0$ in that world.
The empirical distribution  will be compared with  $P^\mc{M}$, the distribution {\em induced} by the MG $\mc{M}$, as a marginal for the underlying schema $S$.  The comparison is made considering only tuples in $T$.

\begin{example}\label{ex:emp} (ex. \ref{ex:support} cont.) Consider  world $W^{[002]}$ in Table \ref{tab:distr} belonging to class $C_1$ in Table \ref{tab:classes2prime}; and its associated duplicate-free world, denoted with $W^{[002]}\!\!\downarrow$, with its and empirical distribution.
The two other worlds in  $C_1$ share the same empirical distribution. \ignore{Since all the tuples shown in $D^p$ belong to this world, the support is $T = \{t_1,t_2,t_3,t_4\}$.}

\captionof{table}{A World's Associated Distributions.}\label{tab:distr} \vspace{-2mm}
\begin{center}{\scriptsize
\begin{tabular}{l|l|l|l|}
\cline{2-4}
        $W^{[002]}$      & $A^o$ & $B^o$ & $C^m$ \\ \hline
$\tau_1$                  & $a$ & 0  & 0  \\ \hline
$\tau_2$                  & $a$ & 0  & 0  \\ \hline
$\mc{B}(\tau_3)$ & $a$ & 1  & 0  \\  \hline
$\tau_4$                  & $a$ & 1  & 0 \\ \hline
$\mc{B}(\tau_5)$ & $a$ & 1  & 0   \\ \hline
$\tau_6$                  & $a$ & 1  & 1  \\ \hline
$\mc{B}(\tau_7)$
                    & $a$ & 1  & 2  \\ \hline
$\tau_8$                  & $a$ & 1  & 2   \\ \hline
\end{tabular}}
\end{center}

\vspace{-3mm}
\begin{center}
{\scriptsize
\begin{tabular}{l|l|l|l||c|c|}
\cline{2-6}
        $W^{[002]}\!\!\downarrow$     &\!\!$A^o$\!\!&\!\!$B^o$\!\!&\!\!$C^m$\!\!&\!\!$P^E_{\!W[002]}$\!\!&\!\!$P^\mc{M}$\!\!\\ \hline
$t_1$                  & $a$ & 0  & 0 & 1/4 &0.225\\ \hline
$t_2$ & $a$ & 1  & 0  & 3/8& 0.225\\   \hline
$t_3$                  & $a$ & 1  & 1  & 1/8 &0.1125\\ \hline
$t_4$
                    & $a$ & 1  & 2 & 1/4 & 0.1125\\ \hline
\end{tabular}}
\end{center}

Let us assume, consistently with Example \ref{ex:aggre}, that $\nit{dom}(C^m) = \{0,1,2\}, \nit{dom}(A^o) = \{a,b\}, \nit{dom}(B^o) = \{0,1\}, \nit{dom}(C^\star) = \{0,1,2,\na\}$, and also: {\footnotesize $
P^\mc{M}(C^m=0)=1/2,
P^\mc{M}(C^m=1) = P^\mc{M}(C^m=2) = 1/4,
P^\mc{M}(B^o=0) = P^\mc{M}(B^o=1) = 1/2,
P^\mc{M}(A^o=a)=0.9,  \
P^\mc{M}(A^o=b) = 0.1$}.
 As noted in Example \ref{ex:aggre}, with these probabilities we can compute the induced probabilities (using equations (\ref{eq:ass1})-
(\ref{eq:ass5}), (\ref{eq:joint})): \  For $x \in \{a,b\}, y \in \{0,1\}, z \in \{0,1,2\}$:\ignore{\footnote{Details in Example \ref{ex:emp2} in the Appendix.}}
{\footnotesize $P^\mc{M}(A^o\!=\!x,B^o\!=\!y,C^m\!=\!z)= \sum\limits_{u \in \{0,1\}, v \in \{0,1,2,\na\}} \hspace{-8mm}P^\mc{M}(A^o\!=\!x,B^o\!=\!y,C^m\!=\!z,\mbb{I}^C\!=\!u, C^\star\!=\!v)
=
P^\mc{M}(C^m\!=\!z) \!   \times \! P^\mc{M}(B^o\!=\!y) \! \times \! P^\mc{M}(A^o\!=\!x)$}.
\ignore{ Accordingly, {\footnotesize $P^\mc{M}(a,0,0) = P^\mc{M}(a,1,0) = 0.9 \times 1/2 \times 1/2= 0.225$,  $P^\mc{M}(a,1,1) = P^\mc{M}(a,1,2) = 0.9 \times 1/2 \times 1/4= 0.1125$}.} We obtain the last column, $P^\mc{M}$, in Table \ref{tab:distr}, also shared with the other worlds in class $C_1$.
\boxtheorem
\end{example}

\vspace{-4mm}
\paragraph{Distance-Based Compliance.}

If  we have  an abstract measure of {\em distance}, $d(\cdot,\cdot)$,  between an empirical distribution   and  the joint distribution $P^\mc{M}$, we can define compliance.

\vspace{-2mm}
\begin{definition} \label{def:dist}
 (a) The {\em  compliance degree} of $C \in \mc{C}(D^\star,\mc{M})$ is: \
    $d^c(C, \mc{M})  \ := \ d(P^E_W,P^\mc{M}), \ \mbox{ with any } W \in C$. \
(b) $C$ is a {\em most-compliant class} (an MC-class) w.r.t. $d$ if $d^c(C, \mc{M})$ takes a minimum value in $\mc{C}(D^\star,\mc{M})$.  $\mc{C}^\nit{mc}_{\!d}$ denotes the collection of MC-classes relative to $d$.
 \boxtheorem
\end{definition}

\vspace{-2mm}The distance in (a) is well defined since all worlds in a class of matching worlds have the same empirical distribution.

Several distances between probability distributions offer themselves to define the compliance degree, among them the {\em Kullback-Leibler Divergence} (KLD) \cite{larry}:

\vspace{-4mm}
{\scriptsize \begin{align}
d^{\mathrm{KL}\!}(P^E_W, P^\mc{M}) &:= \mathrm{Div_{KL}}(P_W^E \parallel P^\mc{M})
:= \sum_{t \in T} P_W^E(t) \, \ln \frac{P_W^E(t)}{P^\mc{M}(t)}, \label{eq:kld}
\end{align}}

\vspace{-1mm} so as the {\em Euclidean Distance}, $d^{\nit{EU}}$ \cite{hastie}.  For a class $C$, $\nit{KLD}(C)$ denotes the KL-distance in common to all worlds in $C$ to $P^\mc{M}$.
Any two matching worlds, $W, W^\prime$, become equally compliant. \
However, they may have different global probabilities, $p^\mc{W}(W)$ and $p^\mc{W}(W^\prime)$.

\vspace{-2mm}
\begin{example} \label{ex:sum+} (ex. \ref{ex:emp} cont.) With (\ref{eq:kld}), we can compute the KL-distance for each class. \ignore{for world $W^{[002]}$, which is the same for all the worlds in its class $C_1$:
$d^{\mathrm{KL}\!}(P^E_{W^{[002]}}, P^\mc{M}) =  \sum\limits_{j=1}^{4} P_{W^{[002]}}^E(t_j) \, \ln \frac{P_{W^{[002]}}^E(t_j)}{P^\mc{M}(t_j)}
= 1/4 \times \ln(\frac{1/4}{0.225}) + 3/8 \times \ln(\frac{3/8}{0.225}) +  1/8 \times \ln(\frac{1/8}{0.1125}) + 1/4 \times \ln(\frac{1/4}{0.1125})
= 0.026 + 0.192  + 0.013 + 0.200 = 0.431$.}
Table \ref{tab:classes2}  shows the distances of the different classes to the induced distribution.
 The most-compliant classes (MC-classes) are $C_1, C_5, C_8$. The most probable classes are $C_1, C_3, C_5$.
\boxtheorem
\end{example}

\vspace{-2mm}Consistently with Definition \ref{def:dist}(b), we denote with $\mc{C}^\nit{mc}_\nit{KL}(D^\star,\mc{M})$ and $\mc{C}^\nit{mc}_\nit{EU}(D^\star,\mc{M})$ the set of MC-classes with respect to the KL- and Euclidean distances, resp.

\vspace*{-5mm}
\begin{center}
\captionof{table}{Classes and Compliance  (bag semantics). }\label{tab:classes2}
\scriptsize{
 $\begin{tabu}{|c|c|c|c|}
 \hline
  \text{Classes} \  C   & \text{Worlds}  & P^\mc{C}(C) &  \nit{KLD}(C)   \\ \hline
C_1 & W^{[002]},  W^{[020]},   W^{[200]} 	& 0.188 & 0.431\\ \hline
C_2 & W^{[011]},  W^{[101]}, W^{[110]}  &	0.094 &0.518\\ \hline
C_3 &  	W^{[012]}, W^{[021]}, W^{[102]},
 &0.188&    0.452\\
&W^{[120]},W^{[201]},W^{[210]} & &
\\ \hline
C_4 & W^{[111]} 	&0.016&   0.711\\ \hline
 C_5 & W^{[001]}, W^{[010]}, W^{[100]} &0.188	&  0.431\\ \hline
C_6 & W^{[022]}, W^{[202]}, W^{[220]} & 0.094	&  0.711\\ \hline
C_7 & W^{[112]}, W^{[121]}, W^{[211]}  & 0.047	&  0.929\\ \hline
C_8 & W^{[000]} 	 & 0.125 & 	 0.431 \\ \hline
C_9 & W^{[122]}, W^{[212]}, W^{[221]} & 0.047 &  0.972\\ \hline
C_{10} & W^{[222]}  & 0.016 &  1.111\\ \hline
 \end{tabu}$
 }
\end{center}

\vspace{-1mm}
\paragraph{Goodness-of-Fit Compliance.}

An alternative take  on possible-world compliance is based on {\em hypothesis testing}. Given  $W \in \mc{W}$ as a sample, we {\em test the hypothesis}, $H_0$, that  it  fits the induced distribution $P^\mc{M}$. \ We use the $\chi^2$-statistic:

\vspace{-1mm}
{\scriptsize $$\chi^2(W) := \sum_{i=1}^{m} \frac{(P^E_W(\bar{a_i}) \ - \ P^\mc{M}(\bar{a}_i))^2}{P^\mc{M}(\bar{a}_i)}, \ \ \ \ (m=|T|)$$}

\vspace{-1mm}which, under $H_0$, has approximately a $\chi^2_{m-1}$-distribution \cite{larry}. It can be seen as a distance, actually a measure of the {\em relative square deviation} of $P^E_W$ from $P^\mc{M}$.

In order to compare worlds, we can use a {\em compliance order} based on the $p$-values for the test:    $p^{\!V\!\!}(W) \ := \ P_{H_0}(\chi^2 \ \geq \  \chi ^2(W))$, that is the probability (under $H_0$) that $\chi^2$ -as a random variable- is at least as contradictory to $H_0$ as the value of $\chi^2(W)$. It is defined by: $
W_1 <^\mc{M}_\nit{pv} W_2$ iff $p^{\!V\!\!}(W_1) \ < \ p^{\!V\!\!}(W_2)$. \ In this way, we have the family $\mc{C}^\nit{mc}_{pv}$ of most-compliant classes based on  the {\em p-value of the $\chi^2$-test}.

\section{Computing Classes\\ and QA}
\label{sec:complexity}

With a general notion of class-based query answering (QA), and the   most-compliant and the most-probable collections of classes,\ignore{\footnote{These semantics provide alternative perspectives on QA, depending on whether the focus is on statistical likelihood or distributional fidelity.}} we can turn to
 computational problems that naturally arise   \ignore{Among them,  given a BID $D^p(D^\star,\mc{M})$, compute a most-compliant (MC) class, or compute a most- probable (MP) class. Given also a
 query $\mc{Q}$,  compute a world $W$ from a MC-class class $C$, and return $\mc{Q}[W]$ and $P^\mc{C}(C)$. Similarly with $C$ a MP-class. We may also want to do this for all MC- or all MP-classes. \
 Diverse computational problems} and to their computational complexities. \ignore{emerge, among them, decision, functional, optimization, enumeration \cite{strozecki2023}, and counting problems \cite{toran91}.}

We will concentrate only on most-compliant-related computational problems.  They  will be formulated and addressed  using  the class representation in Remark \ref{rem:canon}; in particular,
we consider   classes $C_{\blue{\mathbf{k}}}$, and a representative world therein, $W_{\blue{\mathbf{k}}}$, that  can also be seen as encoding the degree of compliance of its class.

\ignore{
\subsection{MC-Related Computational Problems} \label{sec:mccproblems}
}

The  notion of {\em convex-separability} \cite{ahuja1993} associated to distances will become critical.
Intuitively, such distances $d(\cdot,\cdot)$ can be computed by aggregating independent contributions associated with individual tuples, so that the total distance decomposes as a sum of per-tuple convex terms:
$d(C_{\mathbf{k}}, \mathcal{M})=
\sum_{j=1}^m d_j\!\bigl(k_j, P^\mathcal{M}(t_j)\bigr)$, where  each $d_j$ is a convex function capturing the local cost of the assignment made to tuple $t_j$.
The family of convex separable functions encompasses many classical measures of
discrepancy, including $f$-divergences~\cite{ali1966} such as the
KL-divergence, the Hellinger distance , the total variation distance, and the $\chi^2$-statistic,
as well as $L^p$-norms~\cite{royden1988}, for $1 \leq p < \infty$, such as the
$L^1$ Manhattan distance  and the $L^2$- Euclidean
distance.

\ignore{We recall that the support  is  $T=\langle t_1,\ldots, t_m \rangle$.}

\ignore{\begin{definition} (convex-separable distances/divergences) \label{def:sep}
(a) A distance $d(C_{\mathbf{k}}, \mathcal{M})$, for classes $C_\mbf{k}$ with $\mathbf{k}=\langle k_1, \ldots, k_m\rangle$, is called \emph{separable} if it can be expressed as $d(C_{\mathbf{k}}, \mathcal{M})=
\sum_{j=1}^m d_j\!\bigl(k_j, P^\mathcal{M}(t_j)\bigr)$, where $d_j(k_j,P^\mc{M}(t_j))$ is a function that depends on the number of occurrences of tuple $t_j$ in $C_\mbf{k}$ ($t_j$'s contribution to the distance). \ (b)  $d_j(k_j, P^\mathcal{M}(t_j))$ is  {\em discretely convex} (in its first argument) if,  for all integers $k_j \ge 1$,
the discrete {\em marginal contributions} are non-decreasing: \ $d_j(k_j+1, P^\mathcal{M}(t_j)) - d_j(k_j, P^\mathcal{M}(t_j)) \geq d_j(k_j, P^\mathcal{M}(t_j)) - d_j(k_j-1, P^\mathcal{M}(t_j))$. \ $d(C_\mbf{k},\mc{M})$ is called {\em convex-separable} if, in addition, each of the $d_i$ is discretely convex.
\boxtheorem
\end{definition}
\vspace{-2mm}
The second argument of $d_j$ can be any parameter depending on $t_j$. For our application, we chose $P^\mc{M}(t_j)$.
 It is easy to check that $d^\mathrm{KL}$,  $d^\mathrm{EU}$, and the $|chi^2$-statistic are convex-separable, so as many distances used in practice. are convex-separable, so as all the distances, divergences and norms in Table~\ref{table:ex-distances} in the Appendix. Also the }


The following definition formalizes  computational problems of interest related to MCC.

\begin{definition} \label{def:mcc}
(MC-related problems)
Assume $D^p(D^\star,\mc{M})$ is a BID.
We define the following  computational problems:

\vspace{1mm}\noindent
1. \textbf{MCC} (MC-Class): Compute an admissible\ignore{\footnote{That is, $\nit{adm}(\mathbf{k}^\star)$ holds; in particular $\sum_{j=1}^{m} k_j = n$. According to Remark \ref{rem:admiss}, from now on, we leave this condition on class-vectors implicit.}} class-vector $\mathbf{k}^\star \in \mathbb{N}^{m}$  that minimizes the compliance-distance (or maximizes compliance): \
$C_{\mathbf{k}^\star} \ \in \argmin_{\mathbf{k} \in \mathbb{N}^m, \ \sum_{j=1}^{m} k_j = n} d^c(C_{\mathbf{k}}, \mc{M})$.

\vspace{1mm}\noindent 2. \textbf{\#MCC} (Counting MC-Classes): Count the number of MC-classes, i.e. all class-vectors $\mathbf{k}^\star$, such that $C_{\mathbf{k}^\star}$ minimizes

\vspace{1mm}\noindent 3. \textbf{MCC-Enum} (Enumerating MC-Classes): Effectively enumerate all class-vectors $\mathbf{k}^\star$, such that $C_{\mathbf{k}^\star}$ minimizes $\nit{d}^c(C_{\mathbf{k}},\mc{M})$.
\boxtheorem
\end{definition}


Theorem \ref{thm:mcc} provides  computational complexities of MCC-related problems (see \cite{Papadimitriou1994}).

\begin{theorem} \label{thm:mcc}
(MC-related problems)
Assume $D^p(D^\star,\mc{M})$ is a BID with $n$ blocks and a support $T$ with $m$ distinct tuples.
\ For  a {\em convex-separable distance} $d$,
we have:

\vspace{1mm}\noindent
1.   \textbf{MCC}  is in \textbf{FP}, the class of functional problems computable in deterministic polynomial time.

\vspace{1mm}\noindent 2.
     \textbf{\#MCC} is \textbf{\#P-complete}, i.e. it belongs and is hard for the class of counting solutions of NP decision problems.

\vspace{1mm}\noindent 3.  \textbf{MCC-Enum} is in \textbf{DelayP}, the class of enumeration problems where the time delay between the output of any two consecutive solutions is polynomial in the input size.
\boxtheorem
\end{theorem}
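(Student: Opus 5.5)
The plan is to reduce everything to a structured integer program over the class-vectors $\mathbf{k}$ of Remark \ref{rem:canon}. First, characterize admissibility. Build the bipartite graph between the $n$ blocks $\mc{B}_1,\ldots,\mc{B}_n$ and the support tuples $t_1,\ldots,t_m$, with an edge $(\mc{B}_i,t_j)$ whenever $t_j$ occurs in $\mc{B}_i$. A world picks one tuple per block, so $\nit{adm}(\mathbf{k})$ holds iff the following flow network admits an integral flow of value $n$ saturating every source arc: source $\to \mc{B}_i$ with capacity $1$, $\mc{B}_i \to t_j$ with capacity $1$ along each edge, and $t_j \to$ sink with lower and upper bound $k_j$. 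Consequently, the set of admissible $\mathbf{k}$ is the set of integral sink-arc flow vectors of a transportation polytope, and it is a bipartite $b$-matching.

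\textbf{Item 1 (MCC in FP).} Because $d$ is convex-separable, $d^c(C_{\mathbf{k}},\mc{M})=\sum_j d_j(k_j,P^\mc{M}(t_j))$. Each $d_j$ is discretely convex in $k_j$ once ${\|W\|}=n$ is fixed, so $P^E_W(t_j)=k_j/n$. Minimizing this over admissible $\mathbf{k}$ is therefore a convex-cost integer min-cost flow problem: put a separable convex cost on each arc $t_j\to$ sink, with capacity $n_j$ and flow value forced to $n$. Following \cite{ahuja1993}, the convex arc cost is replaced by $n_j$ parallel unit-capacity arcs whose costs are the marginal increments $d_j(\ell,\cdot)-d_j(\ell-1,\cdot)$. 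These increments are nondecreasing, so an optimal flow uses them in order. The result is a standard min-cost flow of polynomial size in $n\cdot m$, which is polynomial in $|D^\star|$ since $m\le \sum_i|\mc{B}_i|$. Integrality of min-cost flow then yields an optimal admissible $\mathbf{k}^\star$, together with a witnessing world $W_{\mathbf{k}^\star}$ read off the block-to-tuple arcs. Computing the increments, for example for KL with $0\ln 0=0$, is routine arithmetic; I would treat it at the usual level of precision, with rational or real-RAM evaluation.

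\textbf{Item 3 (DelayP).} Use the flashlight / backtracking technique. Fix coordinates $k_1,k_2,\ldots$ one at a time. At each node of the search tree, test whether an optimal-value solution extends the current partial assignment. The test is done by rerunning the min-cost flow of item 1 with the arcs $t_j\to$ sink for fixed $j$ constrained to carry exactly the prescribed value, and comparing the resulting optimum with the global optimum $d^\star$. Only branches that pass the test are explored, so every leaf reached is a distinct MC-class. The depth is $m$, and each level tries at most $n+1$ values with one flow computation each. This gives polynomial delay, and each output can also be accompanied by $\mc{Q}[W_{\mathbf{k}}]$ computed on its witness world.

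\textbf{Item 2 (\#P-completeness).} Membership is straightforward. Guess $\mathbf{k}$, check admissibility by a max-flow, and check $d^c(C_{\mathbf{k}},\mc{M})=d^\star$ using the polynomial-time $d^\star$ from item 1. This is an NP-style verifier whose accepting paths correspond bijectively to MC-classes. Hardness is the main obstacle, because counting distinct \emph{vectors} $\mathbf{k}$ rather than worlds or matchings has to be made \#P-hard. My first attempt is to reduce from counting perfect matchings in bipartite graphs, or from a \#P-hard variant such as counting the distinct degree/marginal vectors realizable by bipartite subgraphs. To do this, design a BID whose support has tuples corresponding to edges, and choose $P^\mc{M}$ so that the distance is constant on all admissible vectors, for example uniform with only $0/1$ multiplicities $k_j$. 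Every admissible $\mathbf{k}$ is then an MC-class, and admissible $0/1$ vectors correspond exactly to the combinatorial objects being counted. The delicate part is ensuring that distinct objects give distinct $\mathbf{k}$. This requires each edge tuple to appear in a unique way, with blocks built from the tuples observed in $D^\star$ and a suitable MG realizing the required conditional probabilities via the construction of Definition \ref{def:blocks}. If a direct matching encoding collapses distinct matchings, I would instead reduce from \#Perfect-Matchings through an intermediate problem such as counting realizable flow vectors / bases of a transversal matroid, which is known to be \#P-hard.
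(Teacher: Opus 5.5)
Your items 1 and 3, and the membership half of item 2, follow the paper's route:
\begin{itemize}
\item a min-cost flow with separable convex costs on the tuple-to-sink arcs (you spell out the standard parallel-arc linearization where the paper cites Minoux);
\item a flashlight search that re-solves the flow with the prefix coordinates pinned by lower bound $=$ capacity and compares with the global optimum;
\item a verifier that checks admissibility by max-flow and optimality against the precomputed minimum.
\end{itemize}
The genuine gap is \#P-hardness, where you give a plan rather than a reduction, and the plan as first stated does not work. Asking the distance to be constant on all admissible vectors is the wrong target. Either every tuple lies in a single block, so all admissible vectors are $0/1$ and their number is just the product of the block sizes. Or you end up counting all admissible vectors, which you have not related to any hard problem. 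Your worry about collapsing is justified, and it is exactly where the paper's own argument breaks. Its instance has $n$ primary and $2n$ supplementary blocks over $3n$ edge-tuples, so the only $0/1$ vector with coordinate sum $3n$ is $(1,\ldots,1)$. Perfect matchings of $G$ are then in bijection with the \emph{worlds} of that single class, not with classes. Since 3-regular bipartite graphs always have a perfect matching, every constructed instance has exactly one MC-class. So you cannot borrow the paper's reduction.

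The fallback you name is the right one, and it closes the gap once you carry it out. Start from a transversal matroid presented by sets $A_1,\ldots,A_r$ over a ground set $Y$ that admit a system of distinct representatives. This is without loss of generality: by the Mendelsohn--Dulmage theorem, the sets covered by any maximum matching present the same matroid. Take the $A_i$ as blocks, $T=Y$ with $m=|Y|$, and $P^{\mc{M}}$ uniform on $T$. For KL, $d^c(C_{\mathbf{k}},\mc{M})=\ln m-H(\mathbf{k}/r)$, with $H$ the Shannon entropy. Since $\mathbf{k}$ has nonnegative integer entries summing to $r$, $H(\mathbf{k}/r)\le\ln r$, with equality iff $\mathbf{k}\in\{0,1\}^m$. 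For the Euclidean distance the same follows by minimizing $\sum_j k_j^2$. Hence the MC-classes are exactly the admissible $0/1$ vectors, i.e., the indicator vectors of size-$r$ transversals, i.e., the bases. Distinct bases give distinct vectors, so \textbf{\#MCC} equals the number of bases, which is \#P-complete to compute (Colbourn, Provan and Vertigan, 1995). This even gives hardness for KL and Euclidean themselves, not only for an ad hoc cost like the paper's $\max(0,k-1)$. One point remains open, for you as for the paper. You must justify that prescribed block supports, together with a constant marginal of $P^{\mc{M}}$ on $T$, can be realized by some $D^\star$ and MG through Definition~\ref{def:blocks}.
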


 \ignore{We recall that the class-vectors mentioned in this definition (and everywhere) are subject to the {\em admissibility condition} (see Remark \ref{rem:admiss}).}
\textbf{MCC} is about computing one good class. Once we have it, QA on it is straightforward. \ignore{Computing the probability of the class (as in Definition \ref{def:preferred} of QA) is considered in Section \ref{sec:mpcproblems}.}
 Theorem \ref{thm:mcc}  tells us that computing a single MC-class can be achieved   in polynomial time for a wide class of distance functions. Moreover, the (possibly exponentially many)  MC-classes can be effectively enumerated with polynomial delay. However, counting the number of MC-classes turns out to be $\#P$-complete.
\ These results are obtained through a  reduction from \textbf{MCC} to computing a ``minimum-cost flow" in bipartite graphs and some related problems.

Note that, due to the convex-separability of the KL-divergence,  eucledian distance and the $\chi^2$-statistic, Theorem \ref{thm:mcc} holds in particular for $\mc{C}_{\nit{KL}}^\nit{mc}$, $\mc{C}_{\nit{EU}}^\nit{mc}$ and $\mc{C}^\nit{mc}_{pv}$. \boxtheorem

\ignore{XXXX
\subsection{MP-Related Computational Problems} \label{sec:mpcproblems}

 As in the previous section, we consider class-related problems as opposed to QA on those classes. Unlike the MC-setting where the intrinsic optimization problems are tractable, MP-problems have a significantly higher complexity, where hardness  arises from the need to optimize a \#P-complete function (class-probability) over an possibly  exponentially large solution space of possible class-vectors. Even evaluating the objective function at a single point requires solving a \#P-complete problem. The following theorem introduces the problems and establishes their computational  complexities.

 \begin{theorem} \label{thm:mcc}
  (MP-Related Problems)
Assume $D^p(D^\star,\mc{M})$ is a BID with $n$ blocks and a support $T$ with $m$ distinct tuples. The following are computational problems related to most-probable classes and their complexities:

\vspace{1mm}\noindent
1. \textbf{Class Probability:} Given a class $C_{\mathbf{k}}$, with $\mathbf{k} \in \mathbb{N}^{m}$, compute its probability $P^{\mathcal{C}\!}(C_{\mathbf{k}})$.

\vspace{1mm} \noindent
\textbf{Class Probability} is \textbf{\#P-complete}.

\vspace{1mm}\noindent 2. \textbf{MPC[D]} (Most-Probable Class-Decision Problem): Given a threshold $\theta \in [0,1]$, decide whether there exists a class-vector $\mathbf{k}$\ignore{ = (k_1, \ldots, k_m) \in \mathbb{N}^m$, such that \ \
$\sum_{i=1}^{m} k_i = n
\quad \text{and} \quad  }, such that
$P^\mathcal{C}(C_{\mathbf{k}}) \geq \theta$.

\vspace{1mm}\noindent
\textbf{MPC[D]} is in \textbf{NP\textsuperscript{PP}}.

\vspace{1mm}\noindent 3. \textbf{CMPC[D]} (Constrained MP-Decision Problem): Given a threshold $\theta \in [0,1]$ and an upper bound vector $\mathbf{b}=(b_1, \ldots, b_m) \in \mathbb{N}^m$, decide if there is \ignore{whether there exists} a class-vector $\mathbf{k}$, \ignore{$ = (k_1, \ldots, k_m) \in \mathbb{N}^m$,} such that, for all $j \in [1,m]$,  $k_j \leq b_j$, and $
P^\mathcal{C}(C_{\mathbf{k}}) \geq \theta
$.

\vspace{1mm}\noindent
\textbf{CMPC[D]} is \textbf{NP\textsuperscript{PP}}-complete.

\vspace{1mm}\noindent
4. \textbf{MPC[O]} (Most-Probable Class–Optimization Problem): Find a class-vector $\mathbf{k}^\star$, such that \ignore{\in \mathbb{N}^{m}$ such that
\vspace{1mm}
\hspace*{2.5cm}$C_{\mathbf{k}^\star} \ \in \argmax_{\mathbf{k} \in \mathbb{N}^m, \ \sum_{i=1}^{m} k_i = n}} $P^{\mathcal{C}\!}(C_{\mathbf{k}})$ takes a maximum value.

\vspace{1mm}\noindent \textbf{MPC[O]} is in \textbf{FP\textsuperscript{NP\textsuperscript{PP}}}.
\boxtheorem
\end{theorem}

The $\#P$-completeness of computing a single class probability establishes a fundamental bottleneck: any algorithm for MP-problems must solve this hard counting problem repeatedly during the search process. As a consequence, the decision problems \textbf{MPC[D]} and \textbf{CMPC[D]} lie in $\mathsf{NP^{PP}}$, the class of decision problems that can be solved by a non-deterministic polynomial-time  with access to an oracle for PP (probabilistic polynomial time).

While the lower bound of \textbf{MPC[D]} remains open, the slightly more expressive variant, \textbf{CMPC[D]}, that enables control over tuple multiplicities, is $\mathsf{NP^{PP}}$-complete. The optimization problem \textbf{MPC[O]} sits at the top of this hierarchy in $\mathsf{FP^{NP^{PP}}}$, the class of function problems computable in deterministic polynomial-time with access to an $\mathsf{NP^{PP}}$ oracle.
XXXX}

\section{Related Work}\label{sec:relWork}

Incomplete DBs, in particular in relation to MVs have been investigated for a long time, under different representations and semantics \cite{imielinski,reiter,greco}. None of those approaches is directly our line of work.

Since the inception of relational DBs, null values, in the form of the SQL constant \Nn, have been used to represent MVs. Its use has been contentious and a subject of several papers. A recent body of investigation \ignore{led by L. Libkin and collaborators} has shed light on the  semantic and algorithmic issues related to the use of \Nn. See \cite{libkin} for an interesting discussion and references. For a simple and practical reconstruction of the use of \Nn \ in SQL that is based on  query-rewriting  see \cite{p2p}.
\ More recently, \cite{console} have considered numerical queries in the presence of null values. See also \cite{lastLibkin} for recent results along this line of work. However, following this common approach, i.e, replacing MVs in $D^\star$ by \Nn, and using any SQL-based DB management system for QA, we would be assigning a very particular ``semantics" to data with MVs; one that is embedded in the way SQL DBs {\em operates} with  \Nn.  In our work, we do not refer to- or handle {\em null values} or \Nn~ as in SQL DBs.

In PDBs, a query semantics defines how answers  are interpreted and computed under uncertainty. The \emph{possible worlds semantics} determines probability distributions over those worlds and  over query answers, by evaluating the query on each world \cite{suciu}. Alternatively, the \emph{confidence} or \emph{marginal semantics} focuses on marginal probabilities of individual tuples appearing in results, simplifying computations by ignoring correlations \cite{dalvi2004,soliman2007topk,re2005trio}. Extensions such as the \emph{top-k} and the \emph{expected score} semantics rank or score tuples based on probabilities or utilities \cite{soliman2007topk,sarma2006working,re07}. One can also decide to use the  \emph{most probable database}  to evaluate the query \cite{NOW}. \ For aggregate queries, similar semantics apply: the \emph{possible worlds semantics} yields a distribution over aggregate values \cite{dalvi2004}. \ignore{, \emph{marginal semantics} computes marginal probabilities for each aggregate result, and \emph{expected aggregate semantics} returns the expected aggregate value, also with group-by } \ignore{Advanced semantics also consider ranking or selecting likely aggregates via \emph{top-k} or \emph{most probable aggregate} methods \cite{soliman2007topk,soliman2008,re07}.}

\ignore{
\red{Several query semantics have been proposed for probabilistic databases, each providing a different way to interpret and compute query results: the \textit{possible worlds semantics} models the database as a distribution over deterministic worlds and derives a probability distribution over query answers by evaluating each world \cite{dalvi2004}, \emph{confidence} or \emph{marginal semantics} focuses on marginal probabilities of individual tuples appearing in results, simplifying computations by ignoring correlations \cite{dalvi2004,soliman2007topk,re2005trio}. Extensions such as \emph{top-k} and \emph{expected score} semantics rank or score tuples based on probabilities or utilities \cite{soliman2007topk,sarma2006working,re07}, while \emph{most probable world} semantics selects the single most likely deterministic instance satisfying the query \cite{mostProbDB}.
} }

\ignore{
Various optimization techniques - such as lineage factorization and knowledge compilation- have been developed to enable efficient query evaluation over probabilistic databases \cite{suciu,Olteanu2015}.}

There is a large body of research on dealing with MVs, mostly in Statistics \cite{allisonMI,gelman}. A common technique is {\em imputation}, which amounts to filling in for them using other values in the domain.
\ MMs were proposed in that context \cite{rubinBook}.
Imputation methods and statistical estimates in the presence of MV come in different forms, and may   depend on MMs \cite{rubinBook,gelman}. \  Our work is not about these forms of  ``classic imputation", but, instead and in general terms, we do something like an {\em implicit  causality-informed probabilistic imputation} that  gives rise to several possible ``imputed" instances with attached probabilities.

MMs represented as causal networks \cite{pearl} have been introduced and investigated by Mohan and Pearl \cite{mohanthesis2017,mohan21}. We build upon them. \ignore{developing  a {\em qualitative approach}, by using the data at hand in combination with the MG (as opposed to pre-given distributions) to properly estimate (or {\em recover}) probabilities and statistical aggregations.} \ignore{This treatment of MVs has been applied in machine learning \cite{guy2,broeck2015}. }

In  \cite{benny}, the authors take a probabilistic and general approach to  data quality, which may in principle involve different dimensions of quality, including incompleteness. MVs, as those we deal with, are not specifically  addressed. \ignore{
In \cite{salimi19,salimi24}, techniques based on database repairs w.r.t. multi-valued dependencies (MVDs) are used to make instances comply with stochastic conditional dependencies, but not specifically related to MVs. Similar techniques  could be applied in our case, to make the given instance compliant with the MG. Although we chose to concentrate on the already compliant generated instances,  it would be interesting to explore that direction.}
\ Various methodologies have been proposed to assess the compatibility of a dataset with a Bayesian network,  such as {\em model-fit measures}, e.g.  maximum-likelihood \citep{Bishop2006,allisonML,larry},  {\em information-theoretic measures} \cite{MacKay2003}, and representation of stochastic dependencies via generalized {\em multi-valued dependencies} \cite{wong1}.

\section{Conclusions}
\label{sec:conclusion}

In this work, we have provided a principled semantics to a DB  with MVs. Their  occurrences as assumed to be governed by a quantitative Bayesian-Network,  that represents missingness mechanisms.
\  The data semantics relies on  the construction of a BID that induces a space of {\em possible worlds} with associated probabilities. Possible worlds are multiset-DBs without  MVs.

 To support  QA,  possible worlds are classified in  classes. We introduced and investigated two collections of classes: the \emph{Most Compliant Classes},  the MCC-semantics, which prioritizes classes that best align with the underlying MG; and the \emph{Most Probable Classes}, the MPC-semantics, which selects classes according to their aggregate probability mass. \ They reflect the complementary dimensions of statistical plausibility and probabilistic likelihood.

We presented  complexity results for QA under MCC semantics. Notably, we showed that, assuming access to an oracle for evaluating queries over complete databases, QA under the MCC semantics on a single class is tractable in polynomial time. Although enumerating the answers under all MC-classes is computationally intractable, it can still be performed with polynomial delay.

As part of our ongoing work, not reported here, we have investigated the complexity of the MPC-semantics. With the aim of identifying classes of observed DBs for which efficient query evaluation is tractable, we have uncovered
 a natural class of them for which the
MCC- and the MPC-semantics coincide, and QA becomes tractable.


About future research,
we plan to investigate hybrid query semantics that jointly consider compliance and probability, allowing for a more expressive and flexible notion of QA.

We are also interested in exploring how our class-based semantics could inform or be integrated with advanced imputation strategies, especially in statistical or machine learning pipelines.

We are also interested in the  verification and enforcement of  compliance of an observed DB $D^\star$ with the underlying the MG. Techniques as those reported in  \cite{salimi19,salimi24} could be useful.
Finally, implementing these techniques and conducting experimental evaluations on real-world datasets will be essential for assessing their practical applicability and performance.

\vspace{3mm}
\noindent {\bf Acknowledgements:} \  Leopoldo Bertossi has been financially supported by the ``Millennium Institute for Research on Data"  (IMFD, Chile), and NSERC-DG 2023-04650. Part of this work was done while he was visiting the ``Laboratory of Informatics,
  Modelling and Optimization of the Systems" (LIMOS), at U. Clermont-Ferrand, France. He appreciates their support and hospitality. XXX



\appendix

\section{Appendix}

\begin{definition}
\label{def:blocks}
Consider an MG $\mc{M}$ and  an observed instance $D^\star$ for schema $S^\star$, and $R^\star$ a relation in $D^\star$ with schema $R(\bar{A}^o,\bar{A}^\star)$, where $\bar{A}^o, \bar{A}^\star$ are lists of fully observed and possibly taking \na \ attributes, resp.  Let $\tau$ be a tuple in $R^\star$, and $\tau[\bar{A}^o,\bar{A^\prime}^\star]$ its restriction to those attributes without an {\footnotesize \na}, with $\bar{A^\prime}^\star \subseteq \bar{A}^\star$.

\vspace{1mm}\noindent
(a) The {\em block} associated to $\tau$ is the set of tuples for schema $R(\bar{A}^o,\bar{A}^m)$:
\begin{eqnarray}
\mc{B}(\tau) := \{ \ \tau^\prime~|~\tau^\prime[\bar{A}^o,\bar{A^\prime}^\star] = \tau[\bar{A}^o,\bar{A^\prime}^\star], \mbox{ and},\mbox{ for each }\nonumber \\ A \in (\bar{A}^\star \smallsetminus  \bar{A^\prime}^\star),  \tau^\prime[A] \in \nit{dom}(A^m) \ \}. \hspace{-5mm}\label{eq:block}
\end{eqnarray}
\noindent (b)
For $\tau^\prime \in \mc{B}(\tau)$, its  probability $p(\tau^\prime)$ is the conditional probability:
\begin{equation}
p^{\nit{BID}}\!(\tau^\prime) := P^\mc{M}(\tau^\prime[\bar{A}^\star \smallsetminus  \bar{A^\prime}^\star]~|~\tau).\label{eq:probTuple}
\end{equation}
The probability of the tuple in a singleton block  is $1$.

\vspace{1mm}\noindent (c)
$D^p\!(D^\star,\mc{M})$ denotes the BID whose relations $R^p$ contain the blocks $\mc{B}(\tau)$ for $\tau \in R^\star$, and each tuple $\tau^\prime \in R^p$ has probability $p^{\nit{BID}}(\tau^\prime)$.

\vspace{1mm}\noindent (d)
A {\em possible world} associated to $D^\star$ is an instance $W$ for the underlying  schema $S$, with relations $R^W$ that contain, for each $\tau \in R^\star$, only one $\tau^\prime \in \mc{B}(\tau)$, and nothing more. \ $\mc{W}(D^\star)$ denotes the set of possible worlds. \boxtheorem
\end{definition}

\subsection*{Proof of Theorem \ref{thm:sharp}.}
Theorem \ref{thm:sharp} follows from Proposition \ref{prop:complexity} and the $\#P$-hardness of BCQ evaluation on TIDs \cite{suciu}.

\begin{proposition}\label{prop:complexity} For a fixed underlying schema $S$ and BCQ $\mc{Q}$, there is an observed schema $S^\star$, a qualitative MG $\mc{M}$, and a BCQ $\mc{Q}^\prime$, such that:  For every TID $D$ for $S$, an observed DB $D^\star$ for the observed schema $S^\star$ and a distribution on $\mc{M}$ can be built in PTIME in $|D|$, such that  $\mc{Q}[D] = \mc{Q}^\prime[D^p(D^\star,\mc{M})]$. \boxtheorem
\end{proposition}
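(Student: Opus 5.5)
Given a TID $D$ for $S$, we encode every probabilistic tuple as an observed tuple whose ``presence'' is a missing binary value. For each relation $R(\bar X)$ of $S$ we take the observed relation $R^\star(\bar X^o, E^\star)$, where the $\bar X^o$ attributes are fully observed and $E$ is a fresh existence attribute with $\nit{dom}(E^m)=\{0,1\}$. The qualitative MG $\mc{M}$ has, for each such $E$, the nodes $E^m$, $\mbb{I}^E$, $E^\star$ with edges $E^m\to E^\star$ and $\mbb{I}^E\to E^\star$. We make $\mbb{I}^E$ depend on a fully observed attribute, so that the missingness is MAR. The nodes $\bar X^o$ are sources.

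The query is $\mc{Q}'$, obtained from $\mc{Q}$ by replacing each atom $R(\bar x)$ with $R(\bar x, 1)$, the last position referring to $E^m$. Both $S^\star$, $\mc{M}$ and $\mc{Q}'$ depend only on $S$ and $\mc{Q}$, as the statement requires.

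Given $D$, each tuple $t$ of $R$ with probability $p_t$ becomes an observed tuple $(t,\na)$. The BID block $\mc{B}((t,\na))$ is then $\{(t,0),(t,1)\}$. Its probabilities, by Definition~\ref{def:blocks}(b) and the computation of Example~\ref{ex:connection}, are $P^\mc{M}(E^m=e\mid \bar X^o=t, E^\star=\na)$. For these to equal $p_t$ and $1-p_t$, the distribution at $E^m$ must depend on the tuple. Since $E^m$ is a source in the MG form described in Section~\ref{sec:mgs}, this does not hold directly. The cleanest fix is to add a fully observed identifier attribute $K^o$ holding the tid, with an edge $K^o\to E^m$ (or $\bar X^o\to E^m$). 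We then set $P(E^m=1\mid K^o=t)=p_t$, give $K^o$ the uniform distribution over the tids, and let $\mbb{I}^E$ be independent of $E^m$, for instance with $P(\mbb{I}^E=1)=1$. With this choice, the conditioning on $E^\star=\na$ gives exactly $p^{\nit{BID}}((t,1))=p_t$, using assumptions (\ref{eq:ass1})--(\ref{eq:ass5}) and the Markov factorization, as in Example~\ref{ex:connection}. The construction is clearly PTIME: one observed tuple per TID tuple, plus a CPT with one row per tuple. The cost is that the identifier's domain is data-dependent.

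To check correctness, note that blocks are independent, and each world of $D^p(D^\star,\mc{M})$ corresponds bijectively to a subinstance $W\subseteq D$, namely the tuples whose block chose $E^m=1$. That world has probability $\prod_{t\in W}p_t\prod_{t\notin W}(1-p_t)$, which is the TID probability of $W$. Moreover the extended world satisfies $\mc{Q}'$ iff $W\models\mc{Q}$, because the joins of $\mc{Q}'$ use only the tuples with $1$ in the last position. Summing over worlds gives $P(\mc{Q})$ on $D$ equal to $P(\mc{Q}')$ on the BID, which is the meaning of $\mc{Q}[D]=\mc{Q}'[D^p(D^\star,\mc{M})]$. This equality also yields Theorem~\ref{thm:sharp}.

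The main obstacle is reconciling tuple-specific probabilities with a fixed MG whose CPTs must be legitimate and compatible with $D^\star$. I would first try the identifier attribute as the parent of $E^m$. If that is considered to violate the MG shape, where $X^m$ nodes are sources, the alternative is to reduce from the $\#P$-hard case of TIDs with probabilities $1/2$ on every tuple. That case suffices, since the hard queries remain $\#P$-hard with probability $1/2$. There a single source distribution $P(E^m=1)=1/2$ works, and the identifier is no longer needed.
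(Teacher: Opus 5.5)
Your proposal is correct and is essentially the paper's construction (Example~\ref{ex:reduc}): a fresh always-missing binary attribute per relation, two-tuple blocks with probabilities $p_t$ and $1-p_t$, and $\mc{Q}'$ obtained by appending the constant $1$ to every atom. The paper, like your parenthetical option $\bar X^o\to E^m$, lets the missing attribute depend on the observed ones ($P^\mc{M}(M_1^m=1\mid A=a,B=b)=p_1$), so neither the identifier $K^o$ nor the probability-$1/2$ fallback is needed; if you do keep $K^o$, add a fresh existential variable for its position in each atom of $\mc{Q}'$.
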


Proposition \ref{prop:complexity} above shows that arbitrary TIDs can be recovered as special cases of the kind of BIDs arising from our data semantics. In fact, we establish a more general statement, in Proposition \ref{lemma:BIDrepresentation}: \emph{Any BID can be obtained from a corresponding observed $D^\star$ and an MG $\mc{M}$}. Before going into this result, we show an example that illustrates Proposition \ref{prop:complexity}.

\begin{table}[h]\center
{\footnotesize $\begin{tabu}{c|c|c||c|}\hline
R&A & B& P\\ \hline
\tau_1 & a & b&p_1\\
\tau_2 & a^\prime & b^\prime &p_2\\ \hhline{~---}
\end{tabu}$~$\begin{tabu}{c|c||c|}
\hline
S&A&P\\ \hline
\tau_3 & a&p_3\\
\tau_4 & b&p_4\\
\tau_5 & b^\prime&p_5\\ \hhline{~--}
\end{tabu}$~~~~~$\begin{tabu}{c|c|c|c|}\hline
R^\star&A^o & B^o&M_1^\star\\ \hline
\tau_1 & a & b &\na\\
\tau_2 & a^\prime & b^\prime&\na\\ \hhline{~---}
\end{tabu}$
~$\begin{tabu}{c|c|c|}
\hline
S^\star&A^o&M_2^\star\\ \hline
\tau_3 & a&\na\\
\tau_4 & b&\na\\
\tau_5 & b^\prime&\na\\ \hhline{~--}
\end{tabu}$}
\caption{(a) Initial TID. \ \ \ (b) Associated observed DB. } \label{tab:reduc}
\end{table}

\begin{example} \label{ex:reduc} Consider schema $\mc{S} = \{R(A,B), \ S(B)\}$ for a the TID in Table \ref{tab:reduc}(a). The schema for the associated observed DB with MVs is $\mc{S}^\star = \{R(A^o,B^o,M_1^\star), \ S(B^o,M_2^\star)\}$. \ Attributes $M_1^\star, M_2^\star$ may exhbit MVs, and $\nit{dom}(M_1^m) = \nit{dom}(M_2^m) = \{0,1\}$.

Table \ref{tab:reduc}(a) shows the initial TID that we want to represent as a BID, which we will obtain by first creating the observed instance in Table \ref{tab:reduc}(b). We concentrate on table $R$. \ In order to obtain the BID, which should be the one in Figure \ref{fig:result}(b), we use, for $R^\star$, the MG in Figure \ref{fig:result}(a) (showing only the attributes for $R^\star$).\\

We need to define appropriate distributions in the MG, in such a way that we obtain column $P$ in Figure \ref{fig:result}(b). \ For example, for the first probability in that column, it should be: \ $p_1 = p(\tau_1^1) := P(M_1^m = 1| A = a, \ B=b, M_1^\star = \na)$.

\begin{figure}[h]
\begin{center}\includegraphics[width=3cm]{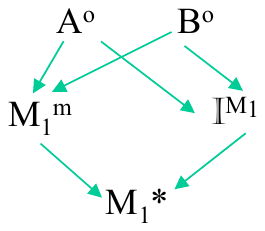}

\vspace{2mm}
{\scriptsize   $\begin{tabu}{c|c|c|c||c|}\hline
R^p&A & B& M_1&P\\ \hline
\tau_1^1 & a & b &1 &p_1\\
\tau_1^2 & a & b&0&(1-p_1)\\ \hhline{~----}
\tau_2^1 & a^\prime & b^\prime &1&p_2\\
\tau_2^2 & a^\prime & b^\prime &0&(1-p_2)\\
\hhline{~----}
\end{tabu}$}\end{center}
\vspace{-4mm}
\caption{(a) Missingness graph. \ \ \ (b) Resulting BID.}\label{fig:result}
\end{figure}

\vspace{1mm}For this MG, the probabilities are calculated as in Example \ref{ex:connection}, obtaining, for example, for the tuples in the first block of $R^p$:
{\scriptsize \begin{eqnarray*}
p(\tau_1^1) &=& P^\mc{M}(M_1^m=1|A=a, B=b) \times P^\mc{M}(A=a,B=b),\\
p(\tau_2^1) &=& P^\mc{M}(M_1^m=0|A=a, B=b) \times P^\mc{M}(A=a,B=b).
\end{eqnarray*}}
Accordingly, we define the probabilities in the MG in such a way that $p(\tau_1^1) = p_1, \ p(\tau_2^1) = (1-p_1)$, etc.

Now, consider the query posed to the original TID: \ $\mc{Q}\!: \ \exists x \exists y(R(x,y) \wedge S(y))$. In order to pose the query to the BID, rewrite it into: \ $\mc{Q}^\prime\!: \ \exists x \exists y(R(x,y,1) \wedge S(y,1))$, and we answer it via the BID.
\boxtheorem
    \end{example}

Now, towards Proposition \ref{lemma:BIDrepresentation}, we first introduce a notion of equivalence between BIDs that ensures their sets of possible worlds, along with the associated probabilities, coincide.

 \begin{definition} (Equivalence of BIDs)
\label{def:bid-equiv}
Let $D$ and $D'$ be two BID instances over the same schema, and let
$\mc{B}$ and $\mc{B}'$ denote their respective sets of blocks.
We say that $D$ and $D'$ are \emph{equivalent}, denoted $D \equiv D'$,
if there exist bijections
\[
f : \mc{B} \to \mc{B}'
\qquad\text{and}\qquad
g : D \to D'
\]
such that, for every block $B \in \mc{B}$:

\begin{enumerate}
    \item[(a)] $g$ restricts to a bijection from $B$ onto $f(B)$;
    \item[(b)] for every tuple $\tau \in B$, we have
    $\tau\!\!\downarrow = g(\tau)\!\!\downarrow$;
    \item[(c)] for every tuple $\tau \in B$,
    \[
    P_B(\tau) = P_{f(B)}\bigl(g(\tau)\bigr),
    \]
    where $P_B(\tau)$ denotes the probability of $\tau$ in block $B$.
\end{enumerate}

When we wish to make the witnessing bijections explicit, we write \ 
$D \stackrel{\tiny f,g}{\equiv} D'$.
\boxtheorem
\end{definition}


As a direct consequence of Definition~\ref{def:bid-equiv}, equivalence of BID instances preserves their possible-world semantics.

\begin{corollary}
\label{cor:equiv-bid}
Let $D$ and $D'$ be two BID instances such that
$D \stackrel{{\tiny f,g}}{\equiv} D'$, and let
$\mc{W}$ and $\mc{W}'$ be their respective sets of possible worlds.
Then there exists a bijection
\[
h : \mc{W} \to \mc{W}'
\]
such that, for every world $W \in \mc{W}$ and every tuple $\tau \in D$:
\begin{enumerate}
    \item[(a)] $\tau \in W$ if and only if $g(\tau) \in h(W)$;
    \item[(b)] $p(W) = p\bigl(h(W)\bigr)$.\boxtheorem
\end{enumerate}
\end{corollary}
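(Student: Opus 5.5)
We will construct $h$ directly from $g$. A possible world $W \in \mc{W}$ of $D$ is determined by a choice function selecting exactly one tuple $\tau_B \in B$ for each block $B \in \mc{B}$. Following the convention for blocks whose probabilities sum to less than $1$, the choice may also be ``no tuple'' from $B$. We define $h(W)$ as the world of $D'$ that selects $g(\tau_B)$ from $f(B)$ for every $B$, and that selects ``no tuple'' from $f(B)$ exactly when $W$ selects none from $B$. By condition (a) of Definition~\ref{def:bid-equiv}, $g(\tau_B) \in f(B)$, so $h(W)$ chooses at most one tuple per block of $D'$. Since $f$ is a bijection on blocks, every block of $D'$ is addressed exactly once. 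Hence $h(W)$ is a legitimate possible world of $D'$.

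Next we show $h$ is a bijection. The inverse is built in the same way from $f^{-1}$ and $g^{-1}$, using that $g^{-1}$ restricts to a bijection from $f(B)$ onto $B$. Composing the two constructions gives the identity on choice functions, hence on worlds. Property (a) is immediate from the construction. Since $g$ is injective and respects blocks, $\tau \in W$ iff $\tau$ is the selected tuple of its block, iff $g(\tau)$ is the selected tuple of $f(B)$, iff $g(\tau) \in h(W)$. Condition (b) of Definition~\ref{def:bid-equiv} additionally guarantees that $h(W)$ contains the same tuple contents as $W$, modulo tids. This is not needed for (a), but it shows that $W$ and $h(W)$ are matching, which is useful later.

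For (b), we use the product formula for BID world probabilities: $p(W)=\prod_{B\in\mc{B}} q_B(W)$. Here $q_B(W)=P_B(\tau_B)$ if $W$ selects $\tau_B$ from $B$, and $q_B(W)=1-\sum_{\tau\in B}P_B(\tau)$ otherwise. By condition (c), $P_B(\tau_B)=P_{f(B)}(g(\tau_B))$. Summing (c) over the block and using that $g$ maps $B$ bijectively onto $f(B)$ gives $\sum_{\tau\in B}P_B(\tau)=\sum_{\tau'\in f(B)}P_{f(B)}(\tau')$, so the ``no tuple'' factors also agree. Thus $q_B(W)=q_{f(B)}(h(W))$ for each $B$. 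Reindexing the product over $\mc{B}'$ via the bijection $f$ yields $p(W)=p(h(W))$.

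None of these steps is deep. The only point requiring care is the bookkeeping for blocks whose total probability is below $1$, together with making sure that reindexing the product by $f$ covers every block of $D'$ exactly once. Both follow from $f$ and $g$ being bijections that are compatible with the block structure.
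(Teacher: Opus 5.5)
Your proof is correct and is exactly the routine verification the paper has in mind. The paper gives no separate argument and simply calls the corollary a direct consequence of Definition~\ref{def:bid-equiv}; your map $h$ (equivalently $h(W)=\{g(\tau)\mid \tau\in W\}$) and the blockwise product comparison, including the check for blocks whose probabilities sum to less than $1$, supply precisely the details it leaves implicit.
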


Hence, if $D \equiv D'$, then the two instances are indistinguishable with respect to query answering, as every query returns exactly the same answers (with the same probabilities) on both.

Using the notion of equivalence, we show that for every BID instance, there exist an observed instance and an MG whose associated BID is equivalent to the original one.

\begin{proposition}
\label{lemma:BIDrepresentation}
For any BID instance $D$ over a schema $S$, there are an (observed) instance $D^\star$ for a schema $S^\star$ and an MG $\mathcal{M}$, such that, for the BID $D^p$ determined by $D^\star$ and $\mc{M}$, it holds:
$D \equiv D^p$. \boxtheorem
\end{proposition}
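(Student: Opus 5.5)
The plan is to generalize the construction of Example~\ref{ex:reduc}, replacing the binary ``selector'' attribute by one whose domain indexes the tuples inside each block. Let $D$ be a BID over $S$. For each relation $R(\bar{A})$ of $S$ with blocks $\mc{B}_1,\ldots,\mc{B}_n$, fix an enumeration $\tau_j^1,\ldots,\tau_j^{\ell_j}$ of block $\mc{B}_j$ and set $L := \max_j \ell_j$ (plus one extra index if some block has total probability $<1$, to be discussed below). The observed schema $S^\star$ will contain, for each $R$, a relation $R^\star(K^o, X^\star)$. Here $K^o$ is a fully observed ``block key'' attribute with $\nit{dom}(K^o)=\{1,\ldots,n\}$, and $X^\star$ is a single partially observed attribute with $\nit{dom}(X^m)=\{1,\ldots,L\}$. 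The observed instance $D^\star$ has exactly one tuple $\langle j,\na\rangle$ per block $\mc{B}_j$, with its own tid. Because we want $D\equiv D^p$ in the sense of Definition~\ref{def:bid-equiv}, and condition~(b) requires $\tau\!\!\downarrow = g(\tau)\!\!\downarrow$, the attributes of $S$ must themselves appear in $D^p$. I would therefore use the attributes $\bar{A}$ of $R$ directly, as partially observed $\bar{A}^m$, with all of them showing $\na$ in $D^\star$, and carry the block index $K^o$ as an observed attribute. If the schema has to match exactly, I would instead observe $\bar{A}$ and let the MG put mass only on the intended values; I will settle this once the mass-zero issue is handled.

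The MG $\mc{M}$ has nodes $K^o$, $\bar{A}^m$ (treated as one compound variable, or as a chain), an indicator $\mbb{I}$, and sinks $\bar{A}^\star$ satisfying (\ref{eq:ass1})--(\ref{eq:ass5}). The edges are $K^o \to \bar{A}^m$ and $K^o\to\mbb{I}$, with $P(\mbb{I}=1\mid K^o=j)=1$ for all $j$. The conditional distribution is $P(\bar{A}^m=\bar{a}\mid K^o=j) := p_j^i$ if $\bar{a}=\tau_j^i[\bar{A}]$, and $0$ otherwise; $P(K^o)$ is uniform. A computation as in Example~\ref{ex:connection}, using (\ref{eq:probTuple}) and the Markov condition, then gives
$p^{\nit{BID}}(\langle j,\bar{a}\rangle) = P^\mc{M}(\bar{A}^m=\bar{a}\mid K^o=j,\bar{A}^\star=\na) = P^\mc{M}(\bar{A}^m=\bar{a}\mid K^o=j)$.
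This holds because $\mbb{I}$ is independent of $\bar{A}^m$ given $K^o$, and $\bar{A}^\star=\na$ is forced exactly when $\mbb{I}=1$.

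Next I construct the bijections. Let $f(\mc{B}_j)=\mc{B}(\langle j,\na\rangle)$, and let $g(\tau_j^i)$ be the block tuple whose $\bar{A}^m$ equals $\tau_j^i[\bar{A}]$. Conditions (a) and (c) of Definition~\ref{def:bid-equiv} hold by construction. Condition (b) holds after projecting away the auxiliary attribute $K^o$, or after treating $K^o$ as part of the tid/surrogate information.

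I expect two steps to be the main obstacles.

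\begin{enumerate}
\item[(i)] \emph{Spurious tuples.} Definition~\ref{def:blocks}(a) puts \emph{every} completion from $\nit{dom}$ into $\mc{B}(\tau)$, including tuples with probability $0$, so $g$ would not be surjective. My first attempt is to read blocks as containing only tuples of positive probability, which is also the convention implicit in the paper's BIDs. Failing that, I would make $\nit{dom}(\bar{A}^m)$ small by encoding values through the index attribute $X^m$, and then rely on a zero-probability tuple never appearing in a world.
\item[(ii)] \emph{Blocks with total mass $p<1$.} For these I add a fresh ``absent'' value $\bot$ to the domain with probability $1-p$. The equivalence then needs $\bot$-tuples to be identified with ``no tuple chosen''. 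If that is not acceptable, I would restrict the claim to BIDs whose blocks sum to $1$, which is the form used in Section~\ref{sec:genBIDs}, and note that the TID case of Proposition~\ref{prop:complexity} is recovered as in Example~\ref{ex:reduc}.
\end{enumerate}
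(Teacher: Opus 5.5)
Your construction is correct and is essentially the paper's. The shared idea is one observed tuple per block in which a block identifier is observed and the tuple content is missing, together with a joint distribution that is uniform over blocks times the block's own distribution (the paper's rows carry mass $p_i^j/n$), so that conditioning on the observed part returns exactly $p_i^j$. The paper differs only in how it encodes the identifier: it uses $n$ partially observed attributes $A_1^\star,\ldots,A_n^\star$, with $A_i^\star=1$ the only observed value in the $i$-th tuple, over a single tuple-encoding attribute $T$, rather than your single fully observed $K^o$.

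The paper's proof tacitly makes the same concessions you flag. It asserts that blocks sum to $1$, ignores zero-probability completions, and leaves auxiliary columns in $D^p$ that must be projected away for condition (b). That projection cannot be avoided in general, because observed tuples with the same observed values generate identical blocks. Your version is, if anything, more careful. Your edges $K^o\to\bar{A}^m$ and $K^o\to\mathbb{I}$ also respect the requirement that starred variables be sinks, which the paper's edges $A_l^\star\to T^m$ do not.
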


\noindent {\bf Proof}: \
Let $D$ be a BID instance over a schema $S$ consisting of $n$ blocks $B_1, \ldots, B_n$, where each bloc $B_i$ contains a set of tuples $t_{i1}, \ldots, t_{1{m_i}}$.
W.l.o.g., we assume that the schema $S$ consists of a single attribute $\mathtt{T}$ that encodes each tuple. Formally, let $S = \{\mathtt{T}\}$, and for every $t_{ij} \in B_i$, we define $t_{ij}[\mathtt{T}] = \mathtt{``t_{ij}"}$.
 Let $p_i^j$ denote the probability of tuple $t_j$ in block $B_i$. By the definition of a BID, we have:
$$
\sum\limits_{j=1}^{m_i} p_i^j = 1 \quad \text{for each } i \in \{1, \ldots, n\}
$$
Figure \ref{fig:arbitrarydBID} shows the BID $D$.
\begin{figure}
    \centering
$\begin{array}{c|c|c|c|}
\hline
\textbf{bloc\_id} & \mathbf{tuple\_id} &  \mathbf{T}  & P   \\ \hline
 B_1 & \tau_1^1  &  t_{11} &   p_1^1\\ \cline{3-4}
  &  \ldots  &  \ldots & \ldots  \\ \cline{3-4}
& \tau_1^{m_1}  &  t_{1{m_1}} &   p_1^{m_1}\\ \hline
  &  \ldots  &  \ldots & \ldots  \\ \hline
 B_n & \tau_n^1  &  t_{n1} &  p_n^1 \\ \cline{3-4}
     &  \ldots  &  \ldots & \ldots  \\ \cline{3-4}
& \tau_n^{m_n}  &  t_{n{m_n}} &   p_n^{m_n}\\ \hline
\end{array}$
\caption{A arbitrary BID $B$}
    \label{fig:arbitrarydBID}
\end{figure}
\begin{figure*}
    \centering
$\begin{array}{c|c|c|c|c|c|c|c|c|c|c|c|c|}
\hline
  \mathbf{tuple\_id} & \mathbf{T^m} & \mathbf{A_1^m} & \ldots & \mathbf{A_n^m}& \mathbf{A_1^*} & \mathbf{\mbb{I}^{A_1}}&\ldots & \mathbf{A_n^*}& \mathbf{\mbb{I}^{A_n}}& \mathbf{T^*} &  \mathbf{\mbb{I}^T} &  \mathbf{Prob} \\ \hline
J^1_{B_1} & t_1 & in(t_1,B_1) & \ldots &   in(t_1,B_n) & 1 & 0 & \ldots & na & 1 & na & 1 & \frac{p^1_1}{n} \\ \cline{2-13}
\ldots & \ldots & \ldots & \ldots & \ldots & \ldots & \ldots & \ldots & \ldots & \ldots & \ldots & \ldots & \ldots\\ \cline{2-13}
J^n_{B_1} & t_n & in(t_n,B_1) & \ldots &   in(t_n,B_n) & 1 & 0 & \ldots & na & 1 & na & 1 & \frac{p^n_1}{n} \\ \cline{2-13}
%
\ldots & \ldots & \ldots & \ldots & \ldots & \ldots & \ldots & \ldots & \ldots & \ldots & \ldots & \ldots & \ldots\\ \cline{2-13}
J^n_{B_n} & t_n & in(t_n, B_n) & \ldots &   in(t_n, B_n) & na & 1 & \ldots & 1 & 0 & na & 1 & \frac{p^n_n}{n} \\ \cline{2-13}
\end{array}$
\caption{The join distribution $P^{\mc{M}}$}
    \label{fig:proof-dist}
\end{figure*}

\begin{figure}
    \centering
$\begin{array}{cc|c|c|c|c|c|}
\hline
 \mathtt{bloc\_id} & \mathtt{tuple\_id} &  \mathbf{T} & \mathbf{A_1} & \ldots & \mathbf{A_n} & P   \\ \hline
 f(\tau_{B_1}) & \tau_{B_1}^1  &  t_1 & 1 & \ldots & A_{n,1} &  p_1^1\\ \cline{3-7}
  &  \ldots  &  \ldots & \ldots & \ldots & \ldots & \ldots  \\ \cline{3-7}
& \tau_{B_1}^{n}  &  t_n & 1 & \ldots & A_{n,n} & p_1^n\\ \hline
\ldots  &  1  &  \ldots & \ldots & \ldots & \ldots & \ldots  \\ \hline
 f(\tau_{B_n}) & \tau_{B_n}^1  &    t_1 & A_{1,1} & \ldots & 1 &  p_n^1\\ \cline{3-7}
  &  \ldots  &  \ldots & \ldots & \ldots & \ldots & \ldots  \\ \cline{3-7}
& \tau_{B_1}^{n}  &  t_n & A_{1,n} & \ldots & 1 & p_n^n\\ \hline
\end{array}$
\caption{The derived BID $D^p$}
    \label{fig:derivedBID}
\end{figure}
To construct the proof, we start with a BID $D$ over a schema $S=\{\mathtt{T}\}$ and, in polynomial time, build an incomplete database $D^*$, an MG $\mc{M}$, and a join distribution $P^{\mc{M}}$. From this construction, we derive a BID instance $D^p \equiv D$.
The database $D^*$ is defined over the schema $\{T^*, A^*_1, \ldots, A^*_n\}$. The attribute $T$ has domain $dom(T) = \{t_1, \ldots, t_n\}$, while each attribute $A_i$ (for $i \in \{1, \ldots, n\}$) has domain $dom(A_i) = [0, 1]$. Each attribute $A_i$ encodes the block $B_i$.
For example, a tuple $t$ in $D^p$ with $t[T] = ``t"$ and $t[A_1] = t[A_3] = 1$ encodes  the fact that the tuple $t$ belongs to blocks $B_1$ and $B_3$.
The database $D^*$ contains $n$ tuples $\tau_{B_i}$, each corresponding to a block $B_i$ in the BID. Each tuple $\tau_{B_i}$ has value $1$  for the attribute $\mathtt{A_i}$ and the value \texttt{na} for all the other  attributes;  that is, $\tau_{B_i}[\mathtt{A_i}] = 1$ and $\tau_{B_i}[\mathtt{T}] = \tau_{B_i}[\mathtt{A_j}] = \texttt{na}$, $\forall j \neq i$.
See Figure~\ref{fig:proof-dstar} for an illustration of $D^*$.
\begin{figure}
    \centering
 $\begin{array}{c|c|c|c|c|}
\hline
\mathbf{bloc\_id} &  \mathbf{T^*} & \mathbf{A^*_1} & \mathbf{\ldots} & \mathbf{A^*_n}   \\ \hline
 \tau_{B_1}  &  na & 1 & na & na  \\ \cline{2-5}
  \ldots  &  \ldots & \ldots & \ldots & \ldots  \\ \cline{2-5}
   \tau_{B_n}  &  na & na & na & 1  \\ \cline{2-5}
\end{array}$
\caption{The incomplete database $D^*$}
    \label{fig:proof-dstar}
\end{figure}
The MG $\mathcal{M}$, shown in Figure~\ref{fig:proof-mg}, encodes the dependency structure between attributes in the incomplete database. Specifically, $\mathcal{M}$ specifies that the value of the attribute $\mathtt{T}$ depends on the values of the attributes $\mathtt{A^*_i}$.







%

\ignore{The MG I have problems with when running Latex outside Overleaf; replaced my figute theMG.pdf
\begin{figure}[h]
    \centering
    \begin{tikzpicture}[->, >=stealth, node distance=0.6cm and 1cm]
        \node (IB1) {$I^{A_1}$};
        \node (IB2) [below=of IB1] {$I^{A_2}$};
        \node (IBdots) [below=of IB2] {$\vdots$};
        \node (IBn) [below=of IBdots] {$I^{A_n}$};

        \node (B1) [right=of IB1] {$A^m_1$};
        \node (B2) [below=of B1] {$A^m_2$};
        \node (Bdots) [below=of B2] {$\vdots$};
        \node (Bn) [below=of Bdots] {$A^m_n$};

        \node (B1star) [below=of IB1,xshift=0.7cm,yshift=0.5cm] {$A^*_1$};
        \node (B2star) [below=of B1star] {$A^*_2$};
        \node (Bstardots) [below=of B2star] {$\vdots$};
        \node (Bnstar) [below=of Bstardots] {$A^*_n$};

        \node (T) [right=of B2] {\texttt{$T^m$}};

        \node (TStar) [right=of T] {\texttt{T}$^*$};
        \node (IT) [above=of TStar] {$I^{\texttt{T}}$};

        \draw[->] (IB1) -- (B1star);
        \draw[->] (IB2) -- (B2star);
        \draw[->] (IBn) -- (Bnstar);

        \draw[->] (B1) -- (B1star);
        \draw[->] (B2) -- (B2star);
        \draw[->] (Bn) -- (Bnstar);

        \draw[->] (B1star) -- (T);
        \draw[->] (B2star) -- (T);
        \draw[->] (Bnstar) -- (T);

        \draw[->] (T) -- (TStar);

        \draw[->] (IT) -- (TStar);

    \end{tikzpicture}
    \caption{The MG $\mathcal{M}$}
    \label{fig:proof-mg}
\end{figure}
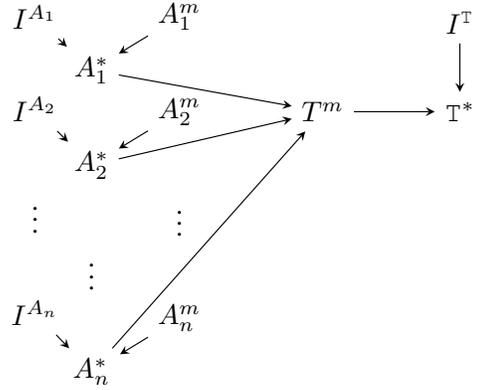
}

\begin{figure}[h]
    \centering
    \includegraphics[width=8cm]{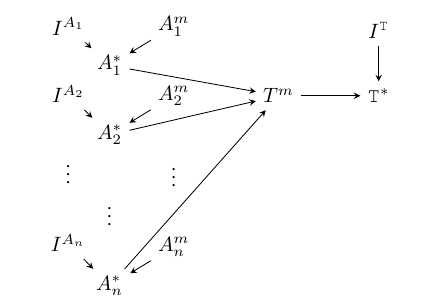}
    \caption{The MG $\mathcal{M}$}
    \label{fig:proof-mg}
\end{figure}

The join distribution $P^{\mathcal{M}}$, illustrated in Figure~\ref{fig:proof-dist}, contains one row $J^j_{B_i}$ for each tuple $t_j$ in block $B_i$. Each row is defined as follows:
\[
\begin{aligned}
& J^j_{B_i}[T^m] = t_j, \\
& J^j_{B_i}[A_l^m] = \text{in}(t_j, B_l),\\ & \quad \text{where } \  \text{in}(t_j, B_l) =
\begin{cases}
0 & \text{if } t_j \in B_l, \\
1 & \text{otherwise},
\end{cases} \\
& J^j_{B_i}[A^*_l] =
\begin{cases}
1 & \text{if } t_j \in B_l, \\
\text{na} & \text{otherwise},
\end{cases}\\
&\quad
J^j_{B_i}[I^{A_l}] =
\begin{cases}
0 & \text{if } t_j \in B_l, \\
1 & \text{otherwise}.
\end{cases}
\end{aligned}
\]

The probability associated with each row is
\[
\Pr(J^j_{B_i}) = \frac{p^j_i}{n}.
\]
These probabilities are normalized, ensuring that
\[
\sum_{i=1}^n \sum_{j=1}^{|B_i|} P^{\mathcal{M}}(J^j_{B_i}) = 1.
\]
The BID instance $D^p$, derived from $(D^*, \mathcal{M},P^{\mathcal{M}})$ is illustrated in Figure~\ref{fig:derivedBID}. It contains a block $f(\tau_{B_i})$ for each incomplete tuple $\tau_{B_i}$ in $D^*$. Each block $f(\tau_{B_i})$ consists of tuples $\tau^j_{B_i}$, which correspond to the possible completions of $\tau_{B_i}$, derived from tuples $t_j$ encoded in $D^p$.

\newpage
The probability $P^{\mathcal{M}}(\tau^j_{B_i})$ assigned to a tuple $\tau^j_{B_i}$ in block $B_i$ of the BID $D^p$ is computed as follows:
\begin{equation}
\begin{aligned}
P^\mathcal{M}(\tau^j_{B_i}) &=
P^\mathcal{M}\Big(
    T^m = t_j \;\Big|\;
    \bigwedge_{l=1}^n A^m_l \in (t_j, B_l),\\
&\quad T^* = \text{na},\\
&\quad A^*_i = 1, \; I^{A_i} = 0,\\
&\quad \bigwedge_{l \neq i} A^*_l = \text{na},\\
&\quad \bigwedge_{l \neq i} I^{A_l} = 1
\Big)
\end{aligned}
\end{equation}
From the MG $\mc{M}$, $T^m$ depends \emph{only} on $A^*_l$, for $l \in [1, n]$.
Hence, we can simplify the conditional probability to:
\begin{equation}
\begin{aligned}
P^\mathcal{M}(\tau^j_{B_i}) &= P^\mathcal{M}(T^m = t_j \mid A^*_i = 1, \bigwedge_{l \neq i} A^*_l = \text{na}) \\
&= \frac{P^\mathcal{M}(T^m = t_j,  A^*_i = 1,  \bigwedge_{l \neq i} A^*_l = \text{na})}{P^\mathcal{M}(A^*_i = 1, \bigwedge_{l \neq i} A^*_l = \text{na})} \\
&=\frac{\frac{p^j_i}{n}}{\sum\limits_{l=1}^n \frac{p^l_i}{n}}=\frac{p^j_i}{\sum\limits_{l=1}^n p^l_i}= p^j_i
\end{aligned}
\end{equation}

The following lemma establishes that the transformation from a BID instance $B$ to the derived instance $D^p$ via the intermediate construction $(D^*, \mathcal{M}, P^{\mathcal{M}})$ preserves the probabilistic semantics over the original schema $S$.
\begin{lemma}
Let $D$ be an arbitrary BID over a schema $S = \{\mathtt{T}\}$. Let $D^*$, $\mathcal{M}$, and the join distribution $P^{\mathcal{M}}$ be constructed from $D$ as described above. Then, we have
$D^p \equiv B$, where  $D^p$ is the BID derived from $D^*$, $\mathcal{M}$, and $P^{\mathcal{M}}$.
\end{lemma}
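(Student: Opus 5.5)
To prove the lemma I will exhibit the two bijections required by Definition~\ref{def:bid-equiv}, $f$ on blocks and $g$ on tuples, and check conditions (a)--(c). Corollary~\ref{cor:equiv-bid} then transfers the possible-world semantics. The natural choice is $f(B_i) := f(\tau_{B_i})$, the block of $D^p$ generated by the observed tuple $\tau_{B_i}$ of $D^\star$. Since $D^\star$ has exactly one tuple per block of $D$ and distinct $\tau_{B_i}$ generate distinct blocks, $f$ is a bijection. For tuples, set $g(\tau_i^j) := \tau^j_{B_i}$, the completion of $\tau_{B_i}$ with $T^m = t_{ij}$. Since $D$ is read with the single attribute $\mathtt{T}$, the projections to be compared are on $\mathtt{T}$, and then $\tau_i^j\!\!\downarrow = t_{ij} = g(\tau_i^j)\!\!\downarrow$.

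The steps, in order, are as follows.

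\emph{Step 1.} Identify the block $f(\tau_{B_i})$ precisely using Definition~\ref{def:blocks}(a),(b). Its members are the completions of the \na-positions of $\tau_{B_i}$ (namely $T$ and $A_l$ for $l\neq i$) that receive nonzero conditional probability under $P^\mc{M}$. Only the rows $J^j_{B_i}$ of Figure~\ref{fig:proof-dist} are consistent with $A^\star_i=1$ and $A^\star_l=\na$ for $l\neq i$. Tuples outside $B_i$ get probability $0$ and can be discarded, since they contribute nothing to possible worlds. This yields condition (a).

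\emph{Step 2.} Condition (b) is immediate from the choice of $g$.

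\emph{Step 3.} Condition (c) is the displayed computation preceding the lemma. Using the Markov condition of the MG in Figure~\ref{fig:proof-mg} ($T^m$ depends only on the $A^\star_l$), the conditional probability reduces to $P^\mc{M}(T^m=t_j \mid A^\star_i=1, \bigwedge_{l\ne i}A^\star_l=\na)$. This equals $\frac{p^j_i/n}{\sum_l p^l_i/n}=p^j_i$ because $\sum_j p_i^j = 1$.

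The main obstacle is Step 1 together with the conditioning pattern in Step 3. The pattern $A^\star_i=1$, $A^\star_l=\na$ for $l\neq i$ must single out exactly the rows $J^\cdot_{B_i}$. If a tuple $t$ lies in several blocks, however, a row $J^j_{B_k}$ with $k\neq i$ would carry $A^\star_i=1$ as well, which spoils both the block membership and the normalization. I would first fix this by making rows block-specific: set $A^\star_l=1$ only for $l=k$, which is what the table in Figure~\ref{fig:proof-dist} actually depicts. With that convention the denominator is exactly $\sum_j p^j_i/n$. I would also check that this $P^\mc{M}$ is consistent with equations (\ref{eq:ass1})--(\ref{eq:ass5}), i.e.\ that $A^\star_l$ equals $A^m_l$ when $\mathbb{I}^{A_l}=0$ and \na\ otherwise; this is where the encoding via $\nit{in}(\cdot,\cdot)$ would need to be adjusted if necessary.
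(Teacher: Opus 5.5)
Your proposal follows the paper's own argument: the block bijection $f(B_i)=f(\tau_{B_i})$, a tuple bijection $g$ that preserves the $\mathtt{T}$-value, and condition (c) read off from $\frac{p^j_i/n}{\sum_{l} p^l_i/n}=p^j_i$. Your repairs make it more careful than the paper's proof: block-specific $A^\star$-patterns, as actually drawn in Figure~\ref{fig:proof-dist}; setting $A^m_i=1$ in the rows $J^j_{B_i}$, which is needed and not optional, since the $\nit{in}(\cdot,\cdot)$ encoding violates (\ref{eq:ass3}); and ignoring zero-probability completions.

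One caveat: drop the appeal to the Markov condition and compute (c) directly from the row table, as the joint conditional of the missing positions given $\tau_{B_i}$ (Definition~\ref{def:blocks}(b)). The Markov reduction is not available here, because a row-table $P^\mc{M}$ with exactly one $\mathbb{I}^{A_l}=0$ per row is not Markov for the DAG of Figure~\ref{fig:proof-mg}: its source nodes $\mathbb{I}^{A_l}$ are dependent. Moreover, conditioning on the $A^m_l$-values, as the paper's display does, can change $P^\mc{M}(T^m=t_j\mid\cdot)$ when those values depend on $t_j$.
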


\vspace{1mm}
\noindent {\bf Proof:} \ Let $B_i$, with $i \in \{1, \ldots, n\}$, denote the blocks in the BID  $D$.
By construction, the mapping $f$ which assigns each block
 $B_i$ in $B$  to the block $f(\tau_{B_i})$ in $D^p$ is bijective. Additionally, there is a one-to-one correspondence between the tuples in each block $B_i$ and those in the corresponding block $\mathcal{B}(\tau_{B_i})$, such that:
\begin{align*}
&\tau_i^j \in B_i \text{ with } \tau_i^j[\mathtt{T}] = \mathtt{"t_j"} \text{ and } p(\tau_i^j) = p_i^j \\
&\qquad\qquad\qquad\qquad \Leftrightarrow \\
&\tau_{B_i}^j \in \mathcal{B}(\tau_{b_i}) \text{ with } \tau_{B_i}^j[\mathtt{T}] = \mathtt{"t_j"} \text{ and } p(\tau_{B_i}^j) = p_i{}^j.
\end{align*}


Hence, we conclude that \ $D \equiv D^p$.\boxtheorem



\subsection*{Proof of Theorem \ref{thm:mcc}.}
The results in Theorem \ref{thm:mcc} are obtained through a  reduction of the \textbf{MCC}-problem to a particular case of ``minimum cost flow problem" in bipartite graphs. The key insight is that finding a most-compliant class corresponds to solving an min cost flow problem where blocks are matched to tuple multiplicities, while minimizing the distance objective.

We formalize the reduction of MCC problem to a minimum cost flow problem.
The reduction constructs a flow network whose feasible integral flows are in one-to-one correspondence with the BID possible worlds, and the cost of a flow equals the compliance distance of the corresponding class-vector.

\paragraph{MCC-Problem Setup.}
Let $D^\star$ be an observed instance, $\mathcal{M}$ its associated MG and $P^\mathcal{M}$ the induced join distribution.
An instance of the MCC problem is given by:
\begin{itemize}
    \item A set of \emph{blocks} $B = \{B_1,\dots,B_n\}$ of the corresponding BID $D^p$.
    \item A set of \emph{tuples} $T = \{t_1,\dots,t_m\}$.
    \item For each block $B_i$ and tuple $t_j$, a probability $p_i(t_j) \in [0,1]$.

\end{itemize}
The goal is to compute an \emph{admissible class-vector} $\mathbf{k}^\star = (k_1^\star,\dots,k_m^\star) \in \mathbb{N}^m$ that minimizes
\[
C_{\mathbf{k}^\star} \in \argmin_{\substack{\mathbf{k}\in\mathbb{N}^m\\ \sum_{j=1}^m k_j = n}} d^c\!\left(C_{\mathbf{k}}, \mathcal{M}\right),
\]

\paragraph{Flow Network Construction.}
From an MCC instance we construct a directed graph $G_{D^p} = (V,E)$ with source $s$ and sink $t$ as follows.

\paragraph{Nodes.}
\[
V = \{s\} \cup L \cup R \cup \{t\},
\]
where
\begin{itemize}
    \item $L = \{u_i \mid B_i \in B\}$ is a set of \emph{block nodes} in one-to-one correspondence with the blocks.
    \item $R = \{v_j \mid t_j \in T\}$ is a set of \emph{tuple nodes} in one-to-one correspondence with the tuples.
\end{itemize}

\paragraph{Arcs.}
The edge set $E = E_S \cup E_{LR} \cup E_T$ consists of three layers:
\begin{enumerate}
    \item \textbf{Source arcs:} For each $u_i \in L$, add an arc $(s,u_i)$ with lower bound $\ell_{s,u_i}=1$ and capacity $c_{s,u_i}=1$.
    \item \textbf{Middle arcs:} For each block $B_i$ and tuple $t_j$ such that $p_i(t_j) > 0$, add an arc $(u_i,v_j)$ with capacity $c_{u_i,v_j}=1$.
    \item \textbf{Sink arcs:} For each $v_j \in R$, add an arc $(v_j,t)$ with capacity $c_{v_j,t} = \deg^{in}(v_j)$, where $\deg^{in}(v_j)$ denotes the number of incoming  arcs to $v_j$ (i.e., the number of blocks that can select $t_j$).
\end{enumerate}

\paragraph{Flow.}
A \emph{feasible integral flow} $f$ assigns to each arc $a\in E$ a nonnegative integer $f_a$ satisfying the lower bounds and capacities, and flow conservation at all intermediate nodes ($L\cup R$).  For the source arcs, the lower bound forces $f_{s,u_i}=1$ for every $i$; thus each block node sends exactly one unit of flow into the network.  By flow conservation, that unit must travel along a middle arc to some tuple node $v_j$ (with $p_i(t_j)>0$) and then to the sink via the corresponding sink arc.  Consequently, the flow values on the sink arcs determine a vector
\[
\mathbf{k}(f) = \bigl(f_{v_1,t},\; f_{v_2,t},\; \dots,\; f_{v_m,t}\bigr) \in \mathbb{N}^m,
\]
which satisfies $\sum_{j=1}^m f_{v_j,t} = n$ because each block contributes exactly one unit to the total outflow from $s$.

\paragraph{Cost.}
The cost of a flow $f$ is defined solely in terms of the vector $\mathbf{k}(f)$:
\[
\operatorname{Cost}(f) = d^c\!\left(C_{\mathbf{k}(f)},\mathcal{M}\right).
\]

We now establish the formal correspondence between feasible integral flows in $G_{D^p}$ and admissible class-vectors, and prove that an optimal flow yields an optimal solution to the MCC problem.

\begin{lemma}
For every feasible integral flow $f$ in $G_{D^p}$, the vector $\mathbf{k}(f)$ is an admissible class-vector (i.e., $\sum_j k_j = n$ and $k_j$ counts the number of blocks assigned to tuple $t_j$).  Conversely, for every admissible class-vector $\mathbf{k}$, there exists a feasible integral flow $f$ such that $\mathbf{k}(f) = \mathbf{k}$.
\end{lemma}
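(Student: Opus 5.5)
The plan is to go through an intermediate object: a \emph{block-to-tuple assignment}, i.e., a function $\sigma:\{1,\ldots,n\}\to\{1,\ldots,m\}$ with $p_i(t_{\sigma(i)})>0$ for every $i$. Possible worlds of $D^p$ correspond one-to-one to such assignments: $W$ picks from block $B_i$ the tuple whose tid-stripped content is $t_{\sigma(i)}$. A world determines the class-vector $\mathbf{k}$ by $k_j=|\sigma^{-1}(j)|$. By Remark~\ref{rem:canon}, $\mathbf{k}$ is admissible precisely when some such $\sigma$ realizes it. The lemma thus reduces to showing that feasible integral flows and assignments determine each other, with the sink-arc values equal to the preimage counts.

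\textbf{Forward direction.} Let $f$ be a feasible integral flow. The lower bound and capacity on $(s,u_i)$ force $f_{s,u_i}=1$. Conservation at $u_i$, together with integrality and unit capacities on the middle arcs, gives exactly one arc $(u_i,v_j)$ with flow $1$; set $\sigma(i):=j$. Middle arcs exist only when $p_i(t_j)>0$, so $t_{\sigma(i)}$ is a genuine tuple of block $B_i$ in the BID, and $\sigma$ defines a possible world $W_\sigma$. Conservation at $v_j$ gives
\[
f_{v_j,t}=\sum_i f_{u_i,v_j}=|\sigma^{-1}(j)|,
\]
which is the multiplicity of $t_j$ in $W_\sigma$. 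Hence $\mathbf{k}(f)$ is the class-vector of $C_{\mathbf{k}(f)}\ni W_\sigma$, so it is admissible. Summing over $j$ gives $\sum_j k_j=n$.

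\textbf{Converse.} Let $\mathbf{k}$ be admissible and pick a world $W\in C_{\mathbf{k}}$, with its assignment $\sigma$. Define
\[
f_{s,u_i}=1,\qquad f_{u_i,v_j}=[\sigma(i)=j],\qquad f_{v_j,t}=|\sigma^{-1}(j)|=k_j .
\]
Next we check the constraints. The source arcs meet their lower and upper bounds. Each middle arc carrying flow exists, since $p_i(t_{\sigma(i)})>0$, and carries at most $1$. Each sink arc carries at most $\deg^{in}(v_j)$, because every block counted in $\sigma^{-1}(j)$ contributes a distinct incoming arc to $v_j$. Conservation holds at every $u_i$ and $v_j$ by construction. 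Therefore $f$ is feasible and integral with $\mathbf{k}(f)=\mathbf{k}$.

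\textbf{Main subtlety.} The step I expect to need care is the identification of "$t_j$ appears in block $B_i$" with "$p_i(t_j)>0$". A tuple listed in a block with probability $0$ would yield a possible world of probability $0$ but no arc. I would resolve this by taking blocks to contain only tuples of positive probability, or equivalently by restricting worlds to those of positive probability, and noting that the definition of admissibility is read relative to this convention. I would also note that distinct tids within a block carry distinct tuple contents, since block tuples differ in their imputed values. This makes the assignment $\sigma$ well defined from $W$.
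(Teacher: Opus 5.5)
Your proof is correct and follows essentially the same route as the paper's: you read off a block-to-tuple assignment from the unit flows on the middle arcs, and conversely you push one unit of flow along an assignment that realizes $\mathbf{k}$. Your version is somewhat tighter in three places. It exhibits the world $W_\sigma$ to certify admissibility in the forward direction. It avoids the paper's claim that $k_j \le \deg(v_j)$ for all $j$ characterizes realizability, a condition that is only necessary. And it makes explicit the positive-probability convention the paper uses silently when it identifies ``blocks that can select $t_j$'' with the in-arcs of $v_j$.
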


\vspace{1mm}
\begin{proof}
The forward direction follows from the construction: each block $u_i$ sends one unit to some tuple $v_j$, and the sink arc capacities ensure that no tuple receives more than its degree, which is automatically satisfied because each incoming edge to $v_j$ can carry at most one unit.  Hence $\mathbf{k}(f)$ is a vector of nonnegative integers summing to $n$.

Conversely, given $\mathbf{k}$ with $\sum_j k_j = n$, we need to assign each block to a tuple so that exactly $k_j$ blocks are assigned to $t_j$.  This is a feasible assignment if and only if for every tuple $t_j$, the number of blocks that can select $t_j$ (i.e., $\deg(v_j)$) is at least $k_j$.  Since $\mathbf{k}$ arises from a valid assignment in the original MCC instance, this condition holds.  Construct a flow by sending one unit from $s$ to each $u_i$, then from $u_i$ to the tuple node corresponding to its assigned $t_j$, and finally from that tuple node to $t$.  By construction all capacities and lower bounds are satisfied, yielding a feasible integral flow with $\mathbf{k}(f)=\mathbf{k}$.
\end{proof}

\begin{theorem}
An optimal solution to the MCC instance is obtained by solving the minimum cost flow problem on $G_{D^p}$ with objective $\operatorname{Cost}(f)=d^c(C_{\mathbf{k}(f)},\mathcal{M})$.  Specifically, if $f^\star$ minimizes $\operatorname{Cost}(f)$ over all feasible integral flows, then $\mathbf{k}(f^\star)$ is an optimal class-vector for the MCC instance.
\end{theorem}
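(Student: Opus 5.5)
The plan is to derive this from the preceding lemma, which gives a two-way correspondence between feasible integral flows in $G_{D^p}$ and admissible class-vectors. Write $\mathcal{F}$ for the set of feasible integral flows and $\mathcal{K}$ for the set of admissible class-vectors. The map $f \mapsto \mathbf{k}(f)$ sends $\mathcal{F}$ into $\mathcal{K}$ (forward direction of the lemma) and onto $\mathcal{K}$ (converse direction). The map need not be injective, since different block-to-tuple assignments can produce the same class-vector; this is exactly the multiplicity of worlds inside one class. That causes no problem, because the cost depends only on $\mathbf{k}(f)$. By the definition of $\operatorname{Cost}$, for each $f$ we have $\operatorname{Cost}(f)=d^c(C_{\mathbf{k}(f)},\mathcal{M})$.

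The key steps, in order, are as follows.

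First, I note that $\mathcal{F}$ is finite and nonempty. It is nonempty because the BID has at least one possible world, and every block contains a tuple with positive probability. So a minimizer $f^\star$ exists.

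Second, fix an arbitrary admissible $\mathbf{k}\in\mathcal{K}$. By surjectivity, pick $f\in\mathcal{F}$ with $\mathbf{k}(f)=\mathbf{k}$. Optimality of $f^\star$ then gives
\[
d^c(C_{\mathbf{k}(f^\star)},\mathcal{M})=\operatorname{Cost}(f^\star)\le \operatorname{Cost}(f)=d^c(C_{\mathbf{k}},\mathcal{M}).
\]

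Third, by the forward direction, $\mathbf{k}(f^\star)$ is itself admissible. It therefore lies in the feasible set of the MCC problem, and the previous inequality shows that it attains the minimum. Hence $C_{\mathbf{k}(f^\star)}\in\argmin d^c$, which is the claim.

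The main subtlety lies in the admissibility correspondence. Admissibility is defined through existence of a class, that is, of a world choosing one tuple per block. The flow lower bounds $\ell_{s,u_i}=1$ together with the unit middle capacities encode exactly such a choice, and middle arcs exist only where $p_i(t_j)>0$, which means precisely that $t_j\in\mathcal{B}_i$. I would make this explicit so that surjectivity does not rely on the loose degree-counting argument. The cleaner route is to define admissibility directly as ``$\mathbf{k}$ is the multiplicity vector of some world'', and then read a flow off that world.

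One further point needs care: the cost is not arc-additive as written. The optimization statement above holds regardless of that. For later algorithmic use, however, I would remark that convex-separability lets one rewrite $\operatorname{Cost}$ as $\sum_j d_j(f_{v_j,t},P^\mathcal{M}(t_j))$, a convex cost on the sink arcs. This places the problem in standard convex-cost flow, where integral optima exist.
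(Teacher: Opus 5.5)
Your proposal is correct and follows the same route as the paper. Both arguments use the preceding lemma's forward direction (every feasible integral flow gives an admissible $\mathbf{k}(f)$) and its surjectivity, note that non-injectivity does no harm because $\operatorname{Cost}(f)$ depends only on $\mathbf{k}(f)$, and conclude that minimizing over flows is equivalent to minimizing over class-vectors; you additionally write out the inequality chain and the existence of a minimizer, which the paper leaves implicit.
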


\vspace{1mm}
\begin{proof}
The lemma establishes a correspondence between feasible integral flows and admissible class-vectors (the mapping is surjective, and two different flows can yield the same $\mathbf{k}$ only if multiple assignments give the same counts, which does not affect the cost because it depends only on $\mathbf{k}$).  The cost of a flow is exactly the compliance distance of the corresponding class-vector.  Therefore minimizing over flows is equivalent to minimizing over class-vectors.
\end{proof}

\paragraph{Complexity Analysis.}
We now analyze the computational complexity of solving the MCC problem via the minimum cost flow reduction. The complexity depends critically on the structure of the distance function $d^c$. When $d^c$ is separable convex, it induces a separable convex cost on the sink arcs of the flow formulation. This places the  problem within the class of minimum cost flow problems with separable convex costs, which    can be solved in polynomial time \cite{Minoux86}.

\begin{lemma}
\label{lem:complexitymcc}
Let  $d^c(\cdot, \mathcal{M})$ be a  convex-separable distance. For an MCC instance with $n$ blocks and $m$ tuples, an optimal admissible class-vector $\mathbf{k}^\star$ can be computed polynomial  time
\end{lemma}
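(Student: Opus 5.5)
The plan is to reduce the MCC instance to a minimum-cost flow problem with separable convex arc costs on the network $G_{D^p}$ built above, and then run a polynomial algorithm for that problem. By the previous lemma and theorem, integral feasible flows correspond to admissible class-vectors, and $\operatorname{Cost}(f)=d^c(C_{\mathbf{k}(f)},\mathcal{M})$. Write $d^c(C_{\mathbf{k}},\mathcal{M})=\sum_{j=1}^m d_j(k_j,P^\mathcal{M}(t_j))$. The cost then becomes arc-separable once we put the cost $c_{v_j,t}(x):=d_j(x,P^\mathcal{M}(t_j))$ on each sink arc $(v_j,t)$ and cost $0$ on the source and middle arcs. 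Each $c_{v_j,t}$ is discretely convex in $x$ on $\{0,\dots,\deg^{in}(v_j)\}$ by assumption, so we have an instance of integer min-cost flow with separable convex costs.

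To remove convexity, I will use the standard arc-splitting trick (as in \cite{ahuja1993,Minoux86}). Replace each sink arc $(v_j,t)$ by $\deg^{in}(v_j)\le n$ parallel unit-capacity arcs. The $r$-th of these arcs gets linear cost equal to the marginal $\delta_j(r):=d_j(r,\cdot)-d_j(r-1,\cdot)$, and we add the constant $d_j(0,\cdot)$ to the total. By discrete convexity $\delta_j(1)\le\delta_j(2)\le\cdots$. Hence any min-cost flow that sends $k$ units into $t$ from $v_j$ can use the first $k$ parallel arcs, incurring exactly $d_j(k,\cdot)-d_j(0,\cdot)$. So optimal integral flows in the expanded network correspond to optimal flows of the convex network, with equal cost.

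The expanded network has $O(n+nm)$ arcs, since there are at most $nm$ middle arcs and at most $\sum_j\deg^{in}(v_j)\le nm$ parallel arcs, and $n+m+2$ nodes. Its capacities are integers. Lower bounds of $1$ on the source arcs are handled by the usual transformation, or simply by requiring a flow of value exactly $n$ with unit capacities on the source arcs. A linear min-cost flow with integral capacities has an integral optimal solution by total unimodularity, computable in strongly polynomial time (e.g.\ successive shortest paths is pseudo-polynomial in flow value $n$, which is already polynomial here). The arc costs $\delta_j(r)$ are computed from the $n\cdot m$ values $d_j(r,P^\mathcal{M}(t_j))$, which are assumed evaluable in polynomial time. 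We then read off $\mathbf{k}^\star=\mathbf{k}(f^\star)$, which is optimal by the previous theorem.

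The main obstacle is the real-number issue for distances such as KL, whose costs involve logarithms. I would handle it by noting that the algorithms only compare sums of costs, and assume an oracle for exact or sufficiently precise comparison. Alternatively, successive shortest paths needs only $n$ augmentations with comparisons of path costs. A secondary point is the boundary behaviour: $0\ln 0=0$ for KL, and tuples of $T$ that appear in no block have fixed $k_j=0$, contributing constants. Both leave convexity intact.
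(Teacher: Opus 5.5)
Your proposal is correct and follows the paper's route: you put the separable convex costs $d_j$ on the sink arcs of $G_{D^p}$ and solve a min-cost flow of value $n$. The only difference is that you make the convex-cost step explicit through the standard arc-splitting linearization, which is valid because the sink capacities are at most $n$, whereas the paper cites \cite{Minoux86} as a black box. Your remarks on requiring flow value exactly $n$, and on comparing logarithmic costs exactly or with enough precision, cover details the paper leaves implicit.
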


\begin{proof}
By construction, the network $G_{D^p}$ has $N = O(n + m + nm)$ edges and all capacities are integral and polynomially bounded in the input size. Specifically:
\begin{itemize}
    \item Source arcs: $n$ edges with unit capacity.
    \item Middle arcs: at most $n \times m$ edges (only where $p_i(t_j) > 0$) with unit capacity.
    \item Sink arcs: $m$ edges with capacity $\deg(v_j) \le n$.
\end{itemize}
The objective is to minimize a separable convex cost function of the flow on the sink arcs. This is precisely a minimum-cost flow problem with an \emph{edge-separable convex cost function} \cite{Minoux86}.
The algorithm of  \cite{Minoux86} solves such problems  for integral flows on directed graphs in polynomial time.

\end{proof}

\paragraph*{Proof: \ \textbf{\#MCC} is \#P-Complete.}

We prove that \textbf{\#MCC} is \textbf{\#P-complete} via the following two lemmas. Lemma \ref{lemma:mcccountmember} addresses membership in \#P, and Lemma \ref{lemma:mcccounthard} establishes hardness.

\begin{lemma}
\label{lemma:mcccountmember}
Verifying whether a class $C_{\mathbf{k}}$ is an optimal solution to the MCC problem  can be done in polynomial time. Consequently, the problem of counting the number of distinct optimal MCC classes, denoted \#MCC, belongs to the complexity class \#P.
\end{lemma}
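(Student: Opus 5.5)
Given a candidate vector $\mathbf{k}\in\mathbb{N}^m$, the verification splits into two polynomial-time checks: admissibility of $\mathbf{k}$, and optimality of its distance value. Both rest on the flow network $G_{D^p}$ and on Lemma~\ref{lem:complexitymcc}.

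For \emph{admissibility}, I use the flow network with the sink-arc capacities tightened. I replace the capacity of each arc $(v_j,t)$ by both a lower and an upper bound equal to $k_j$. By the correspondence lemma between feasible integral flows and class-vectors, $\mathbf{k}$ is admissible iff this network has a feasible integral flow of value $n$. Equivalently, the bipartite graph of blocks versus tuples (edges where $p_i(t_j)>0$) must admit a $b$-matching saturating every block and giving degree exactly $k_j$ to each $v_j$. Feasibility with lower bounds reduces to an ordinary max-flow computation, so by integrality of max-flow this is decided in polynomial time. I also check the trivial conditions $\sum_j k_j=n$ and $k_j\le \deg^{in}(v_j)$.

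For \emph{optimality}, I invoke Lemma~\ref{lem:complexitymcc} to compute, in polynomial time, one optimal admissible vector $\mathbf{k}^\star$ and its value $d^\star=d^c(C_{\mathbf{k}^\star},\mathcal{M})$. I then evaluate $d^c(C_{\mathbf{k}},\mathcal{M})=\sum_j d_j(k_j,P^\mathcal{M}(t_j))$ directly, which takes $m$ evaluations of the per-tuple terms, and accept iff it equals $d^\star$. The main obstacle is numerical. Terms such as $\ln$ in the KL-divergence are irrational, so exact equality testing needs care. I would handle this either by assuming, as is standard for the convex-cost flow results invoked, that the $d_j$ are evaluable to polynomially many bits with a comparison oracle, or by noting that only finitely many rational arguments $k_j/n$ occur, so a sufficiently fine polynomial precision separates distinct values. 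I state this assumption explicitly rather than grind through it.

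For \emph{membership in \#P}, I define an NP machine that guesses $\mathbf{k}\in\{0,\dots,n\}^m$; this takes polynomially many bits, since $m$ is at most the size of the BID. On each guessed branch it runs the two checks above and accepts iff both succeed. The accepting branches are in bijection with the distinct MC-classes, because a class is uniquely determined by its vector (Remark~\ref{rem:canon}). The number of accepting paths is therefore exactly \#MCC, which places the problem in \#P.
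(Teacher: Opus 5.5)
Your proposal is correct and follows the paper's proof essentially step for step. Admissibility is decided by a max-flow computation on $G_{D^p}$ with the sink arcs $(v_j,t)$ capped at $k_j$ (flow value $n$ iff $\mathbf{k}$ is admissible); optimality is checked by computing $d_{\min}$ via Lemma~\ref{lem:complexitymcc} and comparing it with $d^c(C_{\mathbf{k}},\mathcal{M})$; this puts verification in P and hence \#MCC in \#P. Your explicit NP machine and precision assumption only spell out what the paper leaves implicit in its ``assumed computable in polynomial time'', though your alternative remark that finitely many arguments $k_j/n$ occur does not by itself yield a polynomial precision bound without a quantitative separation estimate, so the explicit assumption is what carries the argument.
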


\vspace{1mm}
\begin{proof}
We establish polynomial-time verification through two independent checks, each reducible to classical network flow problems.

\paragraph{Step 1: Admissibility.}
Given a candidate vector $\mathbf{k}=(k_1,\dots,k_m)\in\mathbb{N}^m$ with $\sum_{j=1}^m k_j = n$, we must determine whether there exists a feasible assignment of the $n$ blocks to tuples such that exactly $k_j$ blocks are assigned to tuple $t_j$. This is equivalent to checking whether the constructed flow network $G_{D^p}$ admits a feasible integral flow $f$ with $f_{v_j,t}=k_j$ for all $j$.

To test this, construct a modified network $G_{\mathbf{k}}$ from $G_{D^p}$ by setting the capacity of each sink arc $(v_j,t)$ to $k_j$ (all other capacities remain as in $G_{D^p}$: unit capacities on source arcs and middle arcs). A feasible integral flow of value $n$ in $G_{\mathbf{k}}$ exists iff $\mathbf{k}$ is admissible, because such a flow saturates all source arcs (each block sends one unit) and exactly meets the prescribed outflows at the sink.

The network $G_{\mathbf{k}}$ has $O(n+m)$ nodes and $O(nm)$ edges. Computing its maximum flow value can be done in polynomial time using, e.g., Dinic's algorithm \cite{dinic1970}, which runs in $O(\sqrt{n},|E|)$ for unit-capacity bipartite networks. Hence admissibility is decidable in polynomial time.

\paragraph{Step 2: Optimality.}
Let $d_{\min}$ denote the minimum compliance distance for the given MCC instance. By Lemma~\ref{lem:complexitymcc}, an optimal class-vector can be computed in polynomial time (specifically, $O((nm)^{1+o(1)} \log m \log C)$), so $d_{\min}$ can be obtained efficiently. Once $d_{\min}$ is known, verifying whether $C_{\mathbf{k}}$ is optimal reduces to computing its compliance distance $d^c(C_{\mathbf{k}},\mathcal{M})$ and checking whether $d^c(C_{\mathbf{k}},\mathcal{M}) = d_{\min}$. The compliance distance depends only on $\mathbf{k}$ and the fixed instance parameters, and is assumed computable in polynomial time (as is typical for distance functions in such settings). Thus optimality can be checked in polynomial time.

Since both steps run in polynomial time, the decision problem ``Is $\mathbf{k}$ an optimal MCC class?'' belongs to the class \textbf{P} and therefore  \#MCC is in \#P.

\end{proof}

\begin{lemma}
\label{lemma:mcccounthard}
\textbf{\#MCC} is \#P-hard.
\end{lemma}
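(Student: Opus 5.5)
The plan is a parsimonious reduction from counting bases of a transversal matroid, i.e.\ counting the sets of right-hand vertices of a bipartite graph $H=(U\cup V,E)$ that can be perfectly matched to all of $U$. This problem is known to be \#P-complete (Colbourn, Provan and Vertigan). Membership in \#P is already given by Lemma~\ref{lemma:mcccountmember}, so only hardness remains.

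Given $H$ with $|U|=n$ and $|V|=m\geq n$, I would build a BID with one block $B_i$ per $u_i\in U$, with support $T=\langle t_1,\ldots,t_m\rangle$ in bijection with $V$, and with $p_i(t_j)>0$ exactly when $(u_i,v_j)\in E$. The flow network $G_{D^p}$ built above then has $H$ as its middle layer. By Proposition~\ref{lemma:BIDrepresentation}, this BID (up to the equivalence of Definition~\ref{def:bid-equiv}) arises from some observed instance $D^\star$ and MG $\mc{M}$. By the flow lemma above, the admissible class-vectors are exactly the vectors $\mathbf{k}$ with $\sum_j k_j=n$ that are in-degree vectors of assignments of blocks to adjacent tuples. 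Among these, the $0/1$ vectors are precisely the indicator vectors of the matchable $n$-subsets of $V$, that is, of the transversal bases.

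The key step is to choose the distribution $P^\mc{M}$ so that the MC-classes are exactly these $0/1$ vectors. I would arrange $P^\mc{M}(t_j)=q$ for all $j$. Then the compliance distance is $\sum_j g(k_j)$ for a single function $g$, and for the KL-divergence $g(k)=\frac{k}{n}\ln\frac{k}{nq}$ is strictly convex in $k$. For integer vectors with fixed sum $n\le m$, a standard exchange argument settles the comparison. If some $k_j\ge 2$, then some $k_{j'}=0$, and moving one unit from $j$ to $j'$ strictly decreases $\sum_j g(k_j)$ by strict convexity. Hence every vector with an entry of at least $2$ is strictly worse than a suitable $0/1$ vector, and all $0/1$ vectors have the same cost $n\,g(1)$.

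Consequently, whenever $H$ has a $U$-perfect matching, the MC-classes are exactly the transversal bases, and their number equals the number of such bases. The degenerate case with no $U$-perfect matching is easy to decide in polynomial time and can be excluded from the hard instances, where the count is $0$ anyway. The same argument works for any convex-separable distance with identical strictly convex per-tuple terms, such as $d^{\nit{EU}}$ or $\chi^2$.

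The main obstacle is realizing the uniform marginal $P^\mc{M}(t_j)=q$ on $T$ while keeping the prescribed block supports. The construction in the proof of Proposition~\ref{lemma:BIDrepresentation} gives rows $J^j_{B_i}$ probabilities $p_i^j/n$. I would choose the $p_i^j$, normalizing the joint by a global constant rather than per block if necessary, so that each tuple's total mass over the rows projecting to it is the same. Alternatively, I would pad $T$ with dummy tuples of controlled probability that are forced into singleton blocks. It must also be ensured that tuples in different blocks project to the same support element; this means dropping the block-encoding attributes $A_l$ from the underlying schema, or treating them as fixed observed values. I expect this bookkeeping, rather than the convexity argument, to take the most care.
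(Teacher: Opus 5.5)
Your proposal is correct in substance and takes a genuinely different route from the paper.

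\textbf{The two routes.} The paper reduces from counting perfect matchings of 3-regular bipartite graphs. It uses one primary block per $u_i$, two ``supplementary'' two-tuple blocks per $v_j$, a uniform $P^{\mathcal{M}}$, and the ad hoc convex cost $\max(0,k-1)$. It then asserts a bijection between $0/1$ class-vectors and perfect matchings. You reduce parsimoniously from counting bases of a transversal matroid, with one block per $u_i$ and no gadgets. Under a uniform marginal, your exchange argument shows that for any identical strictly convex per-tuple term (KL, $\chi^2$, squared Euclidean) the MC-classes are exactly the matchable $n$-subsets of $V$.

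\textbf{What your route buys.} Besides covering the natural distances, your route counts the right objects. A class is a multiplicity vector, not an assignment. Different $U$-perfect matchings with the same matched $V$-set collapse into one class, and that is precisely what the number of bases counts. In the paper's construction there are $n+2n=3n$ blocks and $|T|=3n$ tuples, so the only $0/1$ class-vector is $(1,\ldots,1)$. Distinct perfect matchings give distinct worlds inside this single class. Since every 3-regular bipartite graph has a perfect matching, the constructed instance always has exactly one MC-class. So your argument supplies what the paper's counting step lacks.

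\textbf{Two details to repair.}

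First, if $H$ has no $U$-perfect matching, the number of bases is not $0$; the bases then have size $r<n$. Instead, delete the $U$-vertices missed by a maximum matching. By the Mendelsohn--Dulmage theorem this leaves the bases unchanged, and it produces an instance with a $U$-perfect matching.

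Second, neither realizability fix works as literally stated. Dropping the $A_l$ makes every incomplete observed tuple equal to $(\na)$, so all blocks get the same distribution. Giving the $A_l$ block-specific observed values makes completions of different blocks distinct.

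What works is the following. Keep $A_1,\ldots,A_n$ with domain $\{1\}$, and let $\tau_i$ show only $A_i^\star=1$, with $X$ and all other $A_l$ missing. Then all blocks have the same completions $(v_j,1,\ldots,1)$ and differ only in their missingness pattern. Let the indicators $\mathbb{I}^{A_l}$ depend on $X^m$ (and on each other), with $X$ always missing. For each edge $(u_i,v_j)$, give ``pattern $i$ and $X^m=v_j$'' joint mass $1/(m\deg(v_j))$. This yields block supports exactly $N(u_i)$ and the uniform marginal $P^{\mathcal{M}}(t_j)=1/m$, after first deleting isolated $v_j$, which lie in no basis.

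The paper's proof does not address this coupling between block probabilities and $P^{\mathcal{M}}$ at all.
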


\vspace{1mm}
\begin{proof}
We prove hardness by a polynomial-time reduction from the problem of counting perfect matchings in 3-regular bipartite graphs, which is known to be \#P-complete. Let \(G = (U \cup V, E)\) be a 3-regular bipartite graph with \(|U| = |V| = n\). We construct a BID \(\mathcal{D}_G^p\) and a join distribution \(P_G^{\mathcal{M}}\) as follows.

\begin{itemize}
    \item \textbf{Tuples.} For each edge \((u_i, v_j) \in E\), create a distinct tuple \(t_{i,j}\). Let \(T = \{ t_{i,j} : (u_i, v_j) \in E \}\) be the set of all tuples. Since \(G\) is 3-regular, \(|U| = |V| = n\) and each vertex has degree 3, we have \(|T| = 3n\).

    \item \textbf{Primary blocks (type \(B\)).} For each vertex \(u_i \in U\), define a block \(b_i\) that contains the tuples corresponding to its incident edges. Specifically, if \(u_i\) is adjacent to \(v_{j_1}, v_{j_2}, v_{j_3}\) in \(G\), then
    \[
    b_i = \{ t_{i,j_1}, t_{i,j_2}, t_{i,j_3} \}.
    \]
    In each such block, the probability distribution is uniform: every tuple \(t_{i,j} \in b_i\) is assigned probability \(p_i(t_{i,j}) = \frac{1}{3}\).

    \item \textbf{Supplementary blocks (type \(S\)).} For each vertex \(v_j \in V\), list its incident tuples in a fixed order, say \(t_{i_1,j}, t_{i_2,j}, t_{i_3,j}\) where \((u_{i_1}, v_j), (u_{i_2}, v_j), (u_{i_3}, v_j) \in E\). Add two supplementary blocks \(s_{j,1}\) and \(s_{j,2}\). The block \(s_{j,1}\) contains the two consecutive tuples \(t_{i_1,j}\) and \(t_{i_2,j}\), and the block \(s_{j,2}\) contains the two consecutive tuples \(t_{i_2,j}\) and \(t_{i_3,j}\). In each supplementary block, the probability distribution is uniform: every tuple in the block has probability \(\frac{1}{2}\). Consequently, each tuple \(t_{i,j}\) belongs to exactly one primary block \(b_i\) and to either one or two supplementary blocks: specifically, \(t_{i_1,j}\) belongs only to \(s_{j,1}\); \(t_{i_2,j}\) belongs to both \(s_{j,1}\) and \(s_{j,2}\); and \(t_{i_3,j}\) belongs only to \(s_{j,2}\).

    \item \textbf{Join distribution.} The join distribution \(P_G^{\mathcal{M}}\) is uniform over all tuples, i.e., \(P_G^{\mathcal{M}}(t) = \frac{1}{3n}\) for every \(t \in T\).
\end{itemize}

This construction is clearly computable in polynomial time in the size of \(G\).

We consider a convex separable distance function of the form
\[
d(C_{\mathbf{k}}, \mathcal{M}) = \sum_{j=1}^{m} d\bigl(k_j, P^{\mathcal{M}}(t_j)\bigr),
\]
where \(k_j\) denotes the number of occurrences of tuple \(t_j\) in the multiset \(C_{\mathbf{k}}\) (i.e., in the class). We define \(d\bigl(k_j, P^{\mathcal{M}}(t_j)\bigr)\) to be any convex function satisfying
\[
d\bigl(k_j, P^{\mathcal{M}}(t_j)\bigr) = 0 \quad\text{if } k_j \in \{0,1\},\]
\[d\bigl(k_j, P^{\mathcal{M}}(t_j)\bigr) > 0 \quad\text{if } k_j \ge 2.
\]
For instance, we can define
\[
d(k,p) =
\begin{cases}
0, & \text{if } k \le 1,\\[4pt]
k-1, & \text{if } k \ge 2,
\end{cases}
\]
which is convex in \(k\) (its discrete second differences are nonnegative) and independent of \(p\) (as \(p\) is uniform in our construction).

In our setting, the join distribution \(P_G^{\mathcal{M}}\) is uniform over all tuples, so the contribution of each tuple depends only on its occurrence count. Consequently, a class \(C_{\mathbf{k}}\) with \(\mathbf{k} = (k_1,\dots,k_{3n})\) such that every \(k_j \le 1\)—call such a class a \emph{0/1 class}—has zero cost and is therefore optimal whenever it is realizable.

The  structure of the supplementary blocks imposes a crucial constraint on 0/1 classes: for each vertex \(v_j \in V\), among its three incident tuples \(t_{i_1,j}, t_{i_2,j}, t_{i_3,j}\), at most one can be selected by a primary block. Indeed, if two of these tuples were selected by primary blocks, then at least one supplementary block would be forced to select a tuple already used, creating an occurrence count of \(k_j \ge 2\) and thus positive cost. Moreover, if exactly one of these tuples is selected by a primary block, the supplementary blocks \(s_{j,1}\) and \(s_{j,2}\) have a unique valid assignment using the remaining two tuples, each exactly once. Conversely, if no tuple among the three is selected by a primary block, the supplementary blocks cannot both select distinct tuples without violating the 0/1 condition. Hence, in any realizable 0/1 class, for each \(v_j\) exactly one of its three incident tuples is selected by a primary block, and the other two are assigned to the supplementary blocks in the unique forced manner.

This property establishes a bijection between the set of 0/1 classes and the set of perfect matchings of the original 3‑regular bipartite graph \(G\). Specifically, each primary block \(b_i\) (corresponding to \(u_i \in U\)) selects exactly one tuple, corresponding to an edge incident to \(u_i\). The condition that each \(v_j\) has exactly one of its incident tuples selected by a primary block ensures that the selected edges form a matching that covers all vertices in \(V\); since \(|U| = |V| = n\) and each \(u_i\) selects exactly one edge, this matching is perfect. Thus every perfect matching gives rise to a unique 0/1 class, and conversely every 0/1 class yields a perfect matching.

A crucial observation is that, by the structure of the construction and the definition of the distance function, either \emph{all} minimum‑cost classes are 0/1 classes, or \emph{none} of them are. This follows from the fact that the cost of any class is determined solely by the occurrence counts: if a 0/1 class exists, it attains cost zero, which is the global minimum, and any class with a tuple occurring more than once incurs a positive cost and therefore cannot be an MCC. Conversely, if no 0/1 class exists, then every realizable class has at least one tuple with \(k_j \ge 2\), yielding positive cost, and all such classes have the same minimum cost (by uniformity of the construction). Hence the number of minimum‑cost classes (MCCs) of \((\mathcal{D}_G^p, P_G^{\mathcal{M}})\) equals either the number of perfect matchings in \(G\) (when perfect matchings exist) or some constant independent of the graph's matching structure (when no perfect matching exists). Since counting perfect matchings in 3‑regular bipartite graphs is \#P‑complete, we conclude that \#MCC is \#P‑hard.
\end{proof}

\paragraph*{Proof: \textbf{MCC-Enum} is in \textbf{DelayP}.}

We outline a naive polynomial-delay algorithm for enumerating all the MCCs.
The algorithm explores the vector space of MCCs in lexicographic order using a recursive depth-first search.

The search space can be represented as a rooted tree:
\begin{itemize}
    \item The root is labeled by the empty vector $\emptyset$.
    \item The first level consists of nodes labeled $1,2,\ldots,m$.
    \item In general, the children of a node labeled by $h$ are nodes labeled $h0, h1, \ldots, hn$.  Note that, at a level $l$, only \emph{correct} vectors are generated: namely those satisfying the constraint:$\sum_{j=1}^l h_j \;\leq\; n$.
\end{itemize}


A node at depth $l$ encodes the vector $v = (k_1,\ldots,k_l)$ corresponding to the set of classes whose characteristic vectors begin with the prefix $(k_1,\ldots,k_l)$.
In particular, the leaves of the tree correspond bijectively to complete MCC vectors, i.e., to the classes themselves.










The enumeration algorithm generates every MCC exactly once.
Thus, the crux of ensuring polynomial delay lies in testing, at each step, whether an intermediate node is \emph{valid}, i.e., whether it can lead to at least one MCC, before exploring its children.
We now detail this validity check.

\medskip
First, compute the global minimum cost $c_{\min}$ of an optimal flow $f$ $G_{D^p}$
by solving a min cost flow in $G_{D^p}$ (in polynomial time).

An intermediate node $h = (k_1,\ldots,k_l)$ corresponds to the partial assignment where tuple $t_j$ ($1 \leq j \leq l$) is used exactly $k_j$ times.
We then construct a new graph $G_{D^p}^{/h}$ from $G_{D^p}$ by assigning to each arc $(v_j, t)$, for $1 \le j \le l$, a lower bound and capacity given by
$\ell_{v_j,t} = c_{v_j,t} = k_j$.
This enforces that each $v_j$, for $1 \le j \le l$, must be matched to exactly $k_j$ block nodes in $L$, thereby preserving only matchings consistent with $h$.
A node $h$ is valid if and only if $G_{D^p}^{/h}$  admits a minimal cost  flow $G_{D^p}^{/h}$  equal to $c_{\min}$.

\medskip
Both the construction of $G_{D^p}^{/h}$ and the computation of the optimal cost can be performed in polynomial time.
Overall, the enumeration visits at most $O(mn)$ nodes between two consecutive MCCs.
\end{document}